
\documentclass[11pt]{article}
\usepackage[pdftex,margin=1in]{geometry}

\usepackage{amsmath,amssymb,amsfonts,amsthm,bm}

\usepackage{graphicx, color}
\usepackage{epstopdf,dsfont}
\usepackage{hyperref}

\usepackage{subfigure}

\usepackage{natbib}
\bibpunct[, ]{(}{)}{;}{a}{}{,}
\usepackage{pdfpages}

\theoremstyle{plain}
\newtheorem{theorem}{Theorem}[section]

\newtheorem{corollary}[theorem]{Corollary}
\newtheorem{proposition}[theorem]{Proposition}

\theoremstyle{definition}

\theoremstyle{remark}
\newtheorem{remark}{Remark}

\DeclareMathOperator*{\arginf}{arg\,inf}
\DeclareMathOperator*{\diag}{diag}

\newcommand{\cD}{\mathcal D}

\newcommand{\cX}{\mathcal X}

\newcommand{\cN}{\mathcal N}

\newcommand{\bI}{\mathbf I}
\newcommand{\bK}{\mathbf K}

\newcommand{\E}{\mathbb{E}}
\newcommand{\R}{\mathbb{R}}

\newcommand{\bx}{\mathbf{x}}
\newcommand{\vy}{\vec{\mathbf{y}}}

\newcommand{\vb}{\bm{\beta}}

\newcommand{\vm}{\vec{\mathbf{m}}}

\setlength{\textwidth}{15cm}
\linespread{1.1}

\author{Mike Ludkovski and Yuri Saporito}
\title{KrigHedge: Gaussian Process Surrogates for Delta Hedging}

\begin{document}

\maketitle

\begin{abstract}
  We investigate a machine learning approach to option Greeks approximation based on Gaussian Process (GP) surrogates. Our motivation is to implement Delta hedging in cases where direct computation is expensive, such as in local volatility models, or can only ever be done approximately. The proposed method takes in noisily observed option prices, fits a nonparametric input-output map and then analytically differentiates the latter to obtain the various price sensitivities. Thus, a single surrogate yields multiple self-consistent Greeks.  We provide a detailed analysis of numerous aspects of GP surrogates, including choice of kernel family, simulation design, choice of trend function and impact of noise.
   We moreover connect the quality of the Delta approximation to the resulting discrete-time hedging loss. Results are illustrated with two extensive case studies that consider estimation of Delta, Theta and Gamma and benchmark approximation quality and uncertainty quantification using a variety of statistical metrics. Among our key take-aways are the recommendation to use Mat{\'e}rn kernels, the benefit of including virtual training points to capture boundary conditions, and the significant loss of fidelity when training on stock-path-based datasets.
\end{abstract}

\section{Introduction}

Fundamentally, hedging is about learning the sensitivities of the contingent claim to evolving market factors. For example, Delta hedging manages risk by controlling for the sensitivity of the financial derivative to the underlying spot price. Theta manages risk by controlling for the sensitivity of the financial derivative to the passing of time, and so on. Thus, successful hedging strategies depend on accurately learning such sensitivities. Unfortunately the related Greeks are rarely available analytically, motivating the large extant literature 
\citep{capriotti2017aad,fu2012estimating,jazaerli2017functional,ruf2019neural,ruf2020hedging} on Greek approximation and computation. The goal of this article is to contribute to this enterprise by investigating a novel tie-in between machine learning and hedging. The idea is to develop a non-parametric method that does {not} require working with any particular stochastic model class---all that is needed is the data source (or a black-box simulator)  generating \emph{approximate option prices}. The training dataset is used to fit a data-driven input-output mapping and evaluate the respective price sensitivity. Specifically, we propose to use Gaussian Process (GP) surrogates to capture the functional relationship between derivative contract price and relevant model parameters, and then to analytically differentiate the fitted functional approximator to extract the Delta or other desired Greek.

Our specific implementation  brings several advantages over competing methods. First, GPs can handle both interpolation and smoothing tasks, i.e.~one may treat training inputs as being exact or noisy. Therefore, GP surrogates can be applied across the contexts of (a) speeding up Greek computations when a few exact data samples are available (model calibration), of (b) utilizing approximate Monte-Carlo-based samples, and of (c) fitting to real-life data. Second, GPs are well-suited to arbitrary training sets and so naturally accommodate historical data that is highly non-uniform in the price dimension (namely based on a historical trajectory of the underlying). Third, GPs offer \emph{uncertainty quantification} so rather than providing a best-estimate of the desired Greek, GPs further supply a state-dependent confidence interval around that estimate. This interval is crucial for hedging purposes, since it indicates how strict one ought to be in matching the target Greek.  
Fourth, GPs interact well with dynamic training, i.e.~settings where the training sets change over time.

Differently to our approach presented here, GP regression have been applied to other financial mathematics' problems. For instance, in \cite{MadanSchoutens18}, the authors considered GPs to speed-up pricing of derivatives contracts (including exotic ones) within reasonable reduction of accuracy. Additionally, they also applied GP regression to interpolate implied volatility surfaces and use it for backtesting an option strategy.

Considering a different application, \cite{DixonCrepey19} applied multiple-output GPs to speed-up mark-to-market of derivative portfolios in the context of credit valuation adjustment (CVA). Moreover, the authors also use single-output GP to exemplify the learning of the pricing formula of financial models as the Heston model. Similarly to our approach, they also mention that GPs provides analytic expression for sensitivities of derivative prices. However, differently from our approach, they consider only the Black--Scholes model where the GP regression is trained using the Black--Scholes formula and, although Greek approximation is considered, the implications to the hedging problem are not studied. Furthermore, \cite{DixonCrepey21} used GPs to fit no-arbitrage constrained option price surfaces given empirical quotes, while \cite{GoudenegeZanette20} applied them for value function approximation of American options.

Existing literature on numerical Greeks approximation is generally split between the noiseless setting (known as curve-fitting or interpolation) and the noisy case (statistical regression). For interpolation, the state of the art are the Chebyshev polynomials recently studied in the series of works \cite{GlauGass18,GlauMadan19,GlauMahlstedt19}.  For regression many of the best performing methods, such as random forests, are not even differentiable so do not necessarily possess gradients. In contrast, GPs gracefully unify in a single framework both the noiseless and noisy settings.

Within this landscape our contribution is to provide a detailed analysis of GP surrogates for Greek approximation and Delta hedging. To this end, we investigate the role and relative importance of various surrogate ingredients, such as kernel family, shape of experimental design, training data size, and propose several modifications that target the financial application. Moreover, we assess the performance of our Greek approximators both from the statistical perspective, as well as from the trader's perspective in terms of the resulting hedging error. In particular, in Proposition 1 we connect the quality of the Delta approximation with the resulting hedging loss, providing insights into how errors in estimating the Greeks translate into the hedging P\&L.

The rest of the paper is organized as follows. Section \ref{sec:model} explains our approach of approximating price sensitivities using GP surrogates. Section \ref{sec:bs}
presents numerical experiments in the classical Black--Scholes model, while Section \ref{sec:lv} does the same for a local volatility model where ground truth is no longer immediately available. Section \ref{sec:discuss} discusses our findings and outlines future research directions.


\section{Modeling the Option Price Surface}\label{sec:model}
To fix ideas, consider hedging of a European Call contract. The European Call has a given strike $K$ and maturity $T$ and is written on underlying $(S_t)_{t \geq 0}$.
 The respective  no-arbitrage option price is given by (we also use $P$ to denote a generic contract price function)
\begin{align}\label{eq:call}
  P(t, S) := \E^Q \left[ e^{-r (T-t)} (S_T - K)_+ |S_t = S \right],
\end{align}
where we emphasize the dependence on the calendar time $t$ and the current spot price $S$. Above $Q$ is a pricing martingale measure, kept generic for now.
Any other European-style financial contract can be similarly considered; we do not make any direct use of the specific form (or smoothness) of the Call payoff in \eqref{eq:call} henceforth.

As a canonical example of hedging, we are interested in finding the Call Delta $$\Delta(t,S) := \partial P(t, S)/\partial S,$$
for arbitrary $(t,S)$.
 In the most classical setting (such as the Black--Scholes model), one has an analytical formula for $(t,S) \mapsto P(t,S)$ and can then simply differentiate the latter to obtain the Delta. We rather consider the more common situation where $P(t,S)$ is not directly known. Instead, we are provided a training set $\cD = \{ (t_i, S_i, Y_i) : i=1,\ldots, N\}$, where $Y_i \simeq P(t_i, S_i)$, and have to use this data to \emph{infer} or learn $(t,S) \mapsto \widehat{\Delta}(t,S)$. This problem is motivated by the situation where a pricing model is available but it is computationally expensive to directly compute $P(t,S)$ each time the option price is needed and so only a sample of such computations is provided. We distinguish two sub-cases:

\textbf{(a)} Computing $P(t,S)$ exactly is possible, but is challenging/time-consuming. For example, it might necessitate solving a partial differential equation. Then $\cD$ is a collection of inputs where $Y_i = P(t_i,S_i)$ was evaluated exactly and the goal is to obtain a cheap \emph{representation} of the map $(t,S) \mapsto \Delta(t,S)$ by extrapolating the exact $Y_i$'s.

\textbf{(b)} Option prices are evaluated through a Monte Carlo engine. For a given $(t_i, S_i)$, the modeler has access to an empirical average $Y_i$ of $\check{N}$ Monte Carlo samples, with precision being on the  order of $\mathcal{O}(\check{N}^{-1/2})$. For finite $\check{N}$, $Y_i$ is a \emph{noisy} estimate of $P(t_i, S_i)$. The training set $\cD$ is then a collection of such noisy samples that need to be smoothed, interpolated and differentiated to learn the map $(t,S) \mapsto \Delta(t,S)$.

We note that because the training data is generated by the modeler, there is the related question of \emph{experimental design}, i.e.~how to choose  $\cD$ wisely to maximize computational efficiency that we will also explore.

\subsection{Surrogate Gradients}
In both Setting (a) and Setting (b) above, our aim is to provide an estimate of $\Delta(t,S)$ for arbitrary $(t,S) \in \cD'$ in some test set $\cD'$. This could include \emph{in-sample} predictions, i.e.~for $(t,S) \in \cD$ (for example obtaining Delta at same inputs used for training), or \emph{out-of-sample} predictions, including extrapolation $(t,S)$ outside the convex hull of $\cD$ which would be the case if training is confined to the past $t < T_0$ and we want to Delta hedge in the future, $t > T_0$.
Analogously to needing $\Delta(t,S)$ in order to \emph{hedge} the respective risk of the underlying price moves, the trader is also interested in other Greeks. Two examples that we will also consider below include the Theta $\Theta(t,S)$---sensitivity to $t$, and the Gamma $\Gamma(t,S)$---the second derivative of $P(t,S)$ with respect to $S$.

We emphasize that while the training set contains information about option price $P(t_i, S_i)$, our goal is to learn the price \emph{sensitivities}. We tackle this issue by using the intermediate step of first fitting a statistically-driven non-parametric mapping $(t,S) \mapsto \widehat{P}(t,S)$ called a \emph{surrogate} or a metamodel. We then set $\widehat{\Delta}(t,S) := \partial \widehat{P}(t,S)/\partial S$. A key idea is that the second step of taking derivatives is done analytically, even though $\widehat{P}(t,S)$ is non-parametric. On the one hand, this strategy reduces the error in $\widehat{\Delta}$ since only a single metamodeling approximation is needed and the differentiation is exact. On the other hand, it offers precise \emph{uncertainty quantification}, offering an in-model assessment of the accuracy of $\widehat{\Delta}$ by rigorously propagating the underlying uncertainty in $\widehat{P}$. In particular, the method provides credible bands around $\widehat{\Delta}$, giving the end-user a clear guidance on how well is the model learning the Greek. This information is critical for trading purposes, in particular in the context of no-transaction regions under transaction cost regimes, see for instance \cite{WhalleyWilmott1997}.

Our data-driven approach is broadly known as curve-fitting. In general, parametric curve-fitting via constructing a surrogate $(t,S) \mapsto \widehat{P}(t,S)$  (e.g.~via a spline-based $\widehat{P}$) and then differentiating it is known to lead to highly unstable estimates for $\widehat{\Delta}$ and other gradients. This is because the typical $L^2$ criterion that is driving the fitting of $\widehat{P}(t,S)$ is completely unaware of the subsequent plan to compute gradients. As a result, differentiating the typical regression fit can lead to nonsensical gradient estimates, see e.g.~\cite{Jain15}. The machine learning folklore (e.g. in the context of vast Bayesian optimization literature) suggests that GPs, which can be understood as a type of kernel regression with smoothness penalties, are often able to mitigate this concern.

\subsection{Gaussian Process Regression}

We temporarily restrict attention to a single-factor model,
  viewing $P(t, S)$ as a 2D surface in the two coordinates of (`time') and (`stock'), encoded as $\bx = (x_1, x_2) \equiv (t, S)$, i.e.~$\bx \mapsto P(\bx)$ is a function in $\R^d$ with $d=2$. 
 We treat the two coordinates in a symmetric manner for fitting purposes. As a result, the Delta is viewed as one specific instance of the gradient of $P$. Multi-factor models (with fully observed factors) would simply correspond to working in higher $d>2$.

 The curve fitting for $\widehat{P}$ is carried out using {a regularized} $L^2$ regression framework, namely finding the best approximator in a given normed space $\mathcal{H}$ conditional on the training set $\cD$ of size $N$:
 \begin{align}\label{eq:fn-approx}
   \widehat{P} = \arginf_{ P \in \mathcal{H}} \sum_{i=1}^N| P(\bx^i) - Y^i|^2 + \| P \|_{\mathcal{H}}.
 \end{align}
{ The last term acts as a regularizer, balancing quality of fit and the prior likelihood of the approximator.}

We propose to use Gaussian Process regression (GPR) for the purpose of learning the price surface
$\widehat{P}(\bx)$ based on the observation model
\begin{align}\label{eq:noise}
Y(\bx) = P(\bx) + \epsilon(\bx).
\end{align}
Above we distinguish between the \emph{true} price map $P(\bx)$ and the observed price $Y(\bx)$ which may/may not be the same.
Gaussian process regression is a flexible non-parametric regression method \citep{rasmussen2006gaussian} that views the map $\bx \rightarrow P(\bx) $ as a realization of a Gaussian random field so that (in the abstract metamodel probability space, which is independent of the probabilistic structure present in asset stochastic dynamics) any finite collection of $\{P(\bx), \bx \in \cX \}$, is multivariate Gaussian. For any $n \ge 1$ design sites $\{\bx^i\}_{i=1}^{n}$, GPR posits that
\begin{align*}
(P(\bx^1),\ldots,P(\bx^n)) \sim \cN(\vm_n , \bK_n)
\end{align*}
with mean vector $\vm_n := [ m(\bx^1; \vb),\ldots, m(\bx^{n}; \vb)]$
and $n\times n$ covariance matrix $\bK_n$ comprised of $\kappa(\bx^i,\bx^{i'}; \vb), \text{ for } 1 \leq i,i' \leq n $. The vector $\vb$ represents all the hyperparameters for this model. The role of $m(\cdot)$ is to capture the known trends in the response, and the role of $\kappa(\cdot, \cdot)$ is to capture the spatial dependence structure in $\bx \mapsto P(\bx)$.

Given the training dataset $\cD = \{\bx^i,Y^i \}_{i=1}^{N}$, GPR infers the posterior of $P(\cdot)$ by assuming an observation model \eqref{eq:noise} with a Gaussian noise term $\epsilon(\bx) \sim \cN(0,\sigma^2_{\epsilon})$. Conditioning equations for multivariate normal vectors imply that the posterior predictive distribution $P({\bx}_*)|\{\bx^i,Y^i \}_{i=1}^{N}$ at any arbitrary input ${\bx}_*$ is also Gaussian with the posterior mean $m_*({\bx}_*)$ that is the proposed estimator of $P({\bx}_*)$:
\begin{align}
\label{eq:gp_mean}
&m_*(\bx_*) := m(\bx_*)+ K^T(\bK + \sigma^2_\epsilon \bI)^{-1}(\vy-\vm ) = \mathbb{E}\Big[P(\bx_*)\big|\vec{\bx}, \vy\Big]; \\ \notag
&\text{where}\quad \vec{\bx} = [\bx^1,\ldots,\bx^{N} ]^T,  \ \  \vy = [y^1,\ldots,y^{N} ]^T , \\ \notag
&\phantom{\text{where}\quad} K^T = [\kappa(\bx_*, \bx^1; \vb), \ldots, \kappa(\bx_*,\bx^{N};  \vb)],\\ \notag
&\phantom{\text{where}\quad} \vm= [ m(\bx^1; \vb),\ldots,  m(\bx^{N}; \vb)],
\end{align}
and $\bK$ is $N \times N$ covariance matrix described through the kernel function $\kappa(\cdot,\cdot;\vb)$. Henceforth we think of $m_*(\cdot) \equiv \widehat{P}(\cdot)$ as a (smooth) function, even though it is only defined pointwise via \eqref{eq:gp_mean}.

The posterior covariance is
\begin{align}\label{eq:gp_cov}
\mbox{Cov}( P(\bx^1_*), P(\bx^2_*))  & = {\kappa(\bx^1_*,\bx^2_*)}-K^T_1{[\bK+\sigma^2_\epsilon \bI]}^{-1}K_2,
\end{align}
where $K_i = [\kappa(\bx_*^i, \bx^1; \vb), \ldots, \kappa(\bx^i_*,\bx^{N};  \vb)]$ for $i=1,2$.

The interpretation is that $\bx \mapsto m_*(\bx)$ is the ``most likely'' input-output map that is consistent with the training dataset $\cD$ and $\mbox{Var}(P(\bx))$ is the model uncertainty capturing the range of other potential input-output maps that could also be consistent (but less likely) with $\cD$.

\subsection{Specifying a GP Surrogate}\label{sec:gp-surrogates}
Returning to the curve-fitting perspective, the optimization in \eqref{eq:fn-approx} is available in closed-form through the kriging equations \eqref{eq:gp_mean} and GP fitting in fact corresponds to selecting an appropriate function space $\mathcal{H} \equiv \mathcal{H}_\vartheta$ by optimizing the hyper-parameters $\vartheta$. This is done in a hierarchical manner, first fixing a kernel family and then using maximum likelihood optimization to select $\vartheta$.

The GP kernel $\kappa(\bx,\bx')$ controls the smoothness (in the sense of differentiability) of $\widehat{P}$ and hence the roughness of its gradient.
 A popular choice for  $\kappa(\cdot,\cdot)$ is the (anisotropic) squared exponential (SE) family, parametrized by the lengthscales $\{ \ell_{\mathrm{len},k} \}_{k=1}^d $ and the process variance $\sigma_p^2$ :
\begin{equation}
 \kappa_{SE}(\bx,\bx'): = \sigma_p^2\exp{\Big(- \sum_{k=1}^d \frac{(x_{k} - x'_{k} )^2  }{2 \ell^2_{\mathrm{len},k}}\Big)}.\label{eqn:se-kernel}
\end{equation}
The SE kernel~\eqref{eqn:se-kernel} yields infinitely differentiable fits $m_*(\cdot)$. Besides squared exponential kernel described above, other popular kernels include Mat\'ern-3/2 (henceforth, M32) and Mat\'ern-5/2 (M52)~\citep{DiceKriging}:
\begin{align}\label{eq:matern52}
& \kappa_{M52}(\bx,\bx') := \sigma_p^2\prod_{k=1}^d \left(1 + \frac{\sqrt{5}}{\ell_{\mathrm{len},k}}|x_k - x'_{k}| + \frac{5}{3\ell^2_{\mathrm{len},k}}(x_k - x'_{k})^2\right) e^{-\frac{\sqrt{5}}{\ell_{\mathrm{len},k}}|x_k - x'_{k}|},\\ \label{eq:matern32}
& \kappa_{M32}(\bx,\bx') := \sigma_p^2\prod_{k=1}^d \left(1 + \frac{\sqrt{3}}{\ell_{\mathrm{len},k}}|x_k - x'_{k}|\right) e^{-\frac{\sqrt{3}}{\ell_{\mathrm{len},k}}|x_k - x'_{k}|}.
\end{align}
{A Mat\'ern kernel of order $k+1/2$ yields approximators that are in $C^{k}$. Thus Mat\'ern-3/2 fits are in $C^1$ and Mat\'ern-5/2 fits are in $C^2$.}

The mean function is often assumed to be constant $m(\bx;  \vb) = \beta_0$ or described using a linear model $m(\bx;  \vb) = \sum_{k=1}^{K} \beta_k \phi(\bx)$ with $\phi(\cdot)$ representing a polynomial basis. The mean function drives the estimates during extrapolation (far out-of-sample) and also can strongly impact the gradient. For example, incorporating a convex quadratic prior mean compared to a flat linear prior mean modifies the curvature/lengthscales of $\widehat{P}$ and therefore affects the estimated Greek. 
The overall set of the hyperparameters for the GP surrogate is $\vb := ( \{\beta_{k} \}_{k=1}^{K}, \{\ell_{\mathrm{len},k} \}_{k=1}^d,  \sigma^2_{p}, \sigma^2_{\epsilon})$.

Typically one estimates $\vb$ by maximizing the log-likelihood function using the dataset $\{\bx^i,Y^i \}_{i=1}^{N}$. 

\subsection{Obtaining the Greek}\label{sec:greek}

Given a fitted GP model $f_* \sim GP( m_*, K_*)$, its gradient with respect to the coordinate $x_j$ forms another GP, $D \sim GP( g_*, K_g)$. The respective mean at input $\bx_*$ and covariance of $D$ at $\bx_*,\bx_*'$  are specified by
\begin{align}\label{eq:grad-mean}
 &g_*(\bx_*) := \frac{\partial m_*}{\partial x_j}(\bx_*) = \frac{\partial m}{\partial x_j}(\bx_*) + \frac{\partial \kappa}{\partial x_j}(\bx_*, \vec{\bx}) (\bK + \sigma^2_\epsilon \bI)^{-1}(\vy-\vm),\\ \label{eq:grad-sd}
&K_g(\bx_*,\bx_*') = \frac{\partial^2 K_*}{\partial x_j \partial x_j'} = \frac{\partial^2 \kappa}{\partial x_j \partial x_j'}(\bx_*,\bx_*') - \frac{\partial \kappa}{\partial x_j}(\bx_*, \vec{\bx}) (\bK + \sigma^2_\epsilon \bI)^{-1} \frac{\partial \kappa}{\partial x_j'}(\vec{\bx}, \bx_*').
\end{align}

Thus, the gradient estimator is $g_*(\bx_*)$ in \eqref{eq:grad-mean} which can be interpreted as formally differentiating the expression for $m_*(\cdot)$ with respect to $x_j$. Remarkably, the same procedure yields the posterior variance $V_g(\bx_*) = K_g(\bx_*, \bx_*)$ of $g_*(\bx_*)$ in \eqref{eq:grad-sd} and therefore we obtain \emph{analytically} the credible bands around $g_*(\bx_*)$. Namely, the credible band for $\frac{\partial P}{\partial x_j}(\bx_*)$ is
\begin{align}\label{eq:ci}
  CI_{\alpha}(\bx_*) := \left[ g_*(\bx_*) - z_\alpha \sqrt{V_g(\bx_*)}, g_*(\bx_*) + z_\alpha \sqrt{V_g(\bx_*)} \right]
\end{align}
where $z_\alpha$ is the desired quantile of the standard normal distribution, e.g.~$z_{0.95} = 1.96$ to obtain 95\% CI. The upshot is that once a GP surrogate is fit to option prices, obtaining Greek estimates and their model-based uncertainty reduces to evaluating the formulas \eqref{eq:grad-mean}-\eqref{eq:grad-sd}.

As an example of such calculations we present the analytic expression for $g_*(\cdot)$ for the three most common kernels discussed in Section~\ref{sec:gp-surrogates}. While these computations are not new, we also could not find any handy reference for them in the literature.
For the SE kernel \eqref{eqn:se-kernel} we have:
\begin{align}
&\frac{\partial \kappa_{SE}}{\partial x_j}(\bx,\bx') = 2\frac{x_j' - x_j  }{\ell_{\mathrm{len},j}} \kappa_{SE}(\bx,\bx'),\\
& V_g(\bx_*) = \frac{2}{\ell_{\mathrm{len},j}^2} \sigma^2_p - \frac{\partial \kappa_{SE}}{\partial x_j}(\bx_*, \bx) (\bK_{SE} + \sigma^2_\epsilon \bI)^{-1} \frac{\partial \kappa_{SE}}{\partial x_j}(\bx, \bx_*).
\end{align}

For the Mat\'ern-5/2 kernel~\eqref{eq:matern52}, we find
\begin{align}
&\frac{\partial \kappa_{M52}}{\partial x_j}(\bx,\bx') = \left(\frac{-\frac{5}{3\ell_{\mathrm{len},j}^2}(x_j - x'_{j}) - \frac{5^{3/2}}{3\ell_{\mathrm{len},j}^3}(x_j - x'_{j})|x_j - x'_{j}|}{1 + \frac{\sqrt{5}}{\ell_{\mathrm{len},j}}|x_j - x'_{j}| + \frac{5}{3\ell^2_{\mathrm{len},j}}(x_j - x'_{j})^2} \right) \kappa_{M52}(\bx,\bx'),\\
& V_g(\bx_*) = -\frac{5}{3\ell_{\mathrm{len},j}^2} \sigma^2_p - \frac{\partial \kappa_{M52}}{\partial x_j}(\bx_*, \bx) (\bK_{M52} + \sigma^2_\epsilon \bI)^{-1} \frac{\partial \kappa_{M52}}{\partial x_j}(\bx, \bx_*),
\end{align}
and for the Mat\'ern-3/2 kernel~\eqref{eq:matern32}:
\begin{align}
&\frac{\partial \kappa_{M32}}{\partial x_j}(\bx,\bx') = \left(\frac{-\frac{3}{\ell_{\mathrm{len},j}^2}(x_j - x'_{j})}{1 + \frac{\sqrt{3}}{\ell_{\mathrm{len},j}}|x_j - x'_{j}| } \right) \kappa_{M32}(\bx,\bx'),\\
& V_g(\bx_*) = -\frac{3}{\ell_{\mathrm{len},j}^2} \sigma^2_p - \frac{\partial \kappa_{M32}}{\partial x_j}(\bx_*, \bx) (\bK_{M32} + \sigma^2_\epsilon \bI)^{-1} \frac{\partial \kappa_{M32}}{\partial x_j}(\bx, \bx_*).
\end{align}

We emphasize that the above formulas work both for the Delta $\partial P/\partial x_2$ and the Theta $-\partial P/\partial x_1$, with the GP model yielding analytic estimates of all gradients simultaneously, without the need for any additional training or computation.

\begin{remark}
The underlying structure is that differentiation is a linear operator that algebraically ``commutes'' with the Gaussian distributions defining a GP model. Consequently, one may iterate (by applying the chain rule further  on $\kappa(\cdot, \cdot)$ and its derivatives, provided they exist) to obtain analytic expressions for the mean and covariance of higher-order partial derivatives of $f$, yielding
 second-order and higher option sensitivities, for example the Gamma. Instead of doing so, we implemented a finite-difference estimator for $\Gamma(t,S)$:
  \begin{align}\label{eq:fd}
    \widehat{\Gamma}^{fd}(t,S; \delta) := \frac{ \widehat{P}(t,S+\delta) - 2\widehat{P}(t,S) + \widehat{P}(t,S-\delta)}{\delta^2}
  \end{align}
  for a discretization parameter $\delta > 0$. By predicting the GP model on the triplet of sites $\{ (t,S-\delta), (t,S), (t,S+\delta)\}$ we obtain the predictive covariance matrix and can use that to compute the variance of $\widehat{\Gamma}^{fd}(t,S; \delta)$ (which is a linear combination of the respective three $\widehat{P}$ values).  Note that the Mat\'ern-3/2 kernel is not twice differentiable, so formally there is no second sensitivity and we expect numeric instability in applying \eqref{eq:fd} to a M32-based model.
\end{remark}

\subsection{Illustration}

Figure \ref{fig:bs} shows the GP-based $\widehat{\Delta}(t,\cdot)$ for the case of a Call option $P(t,S)$ within a Black--Scholes model with constant coefficients $r=0.04,\sigma=0.22,T=0.4,K=50$, parametrized by time-to-maturity $\tau$ and spot price $S$. The model is trained on a two-dimensional $10 \times 10$ grid (so that $N=100$) $S^i \in \{32, 36, \ldots, 68\}, \tau^i \in \{0.04, 0.08, \ldots, 0.4\}$, using for inputs the exact $P(\tau^i, S^i), i=1,\ldots 100$ available via the Black--Scholes formula. We then display two 1-D slices of the resulting estimate of the Delta $\Delta(\tau,S)$ as a function of spot $S$, keeping time-to-maturity $\tau$ fixed. In the left panel we look at $\tau = 0.5$ which is an extrapolation relative to the training set, maturity being longer than $\bar{\tau} = 0.4$. In the right panel we use $\tau = 0.2$ which is one of the training times-to-maturity, and corresponds to in-sample interpolation. Note that with GPs the two computations are implemented completely identically. In Figure \ref{fig:bs} we compare $\widehat{\Delta}(\tau, \cdot)$ to the exact ground truth $\Delta(\tau, \cdot)$ and also display the corresponding 95\% posterior credible bands, cf.~\eqref{eq:ci} below. We observe that the GP fit is excellent, being indistinguishable from the ground-truth for most of the test locations. While the goodness-of-fit is relatively good in the middle, towards the edges we have numerical artifacts, such as $\widehat{\Delta}$ being outside the interval $[0,1]$ or not being increasing in $S$.

\begin{figure}[ht]
	\begin{tabular}{cc}
	\begin{minipage}[t]{0.48\linewidth}
		\begin{center}
			\includegraphics[width=1\textwidth,height=2.4in,trim=0.2in 0.1in 0in 0.5in]{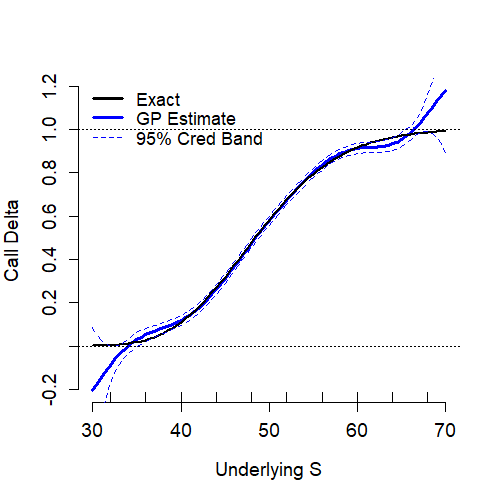}
		\end{center}
	\end{minipage} &
	\begin{minipage}[t]{0.48\linewidth}
		\begin{center}
			\includegraphics[width=1\textwidth,height=2.4in,trim=0.2in 0.1in 0in 0.5in]{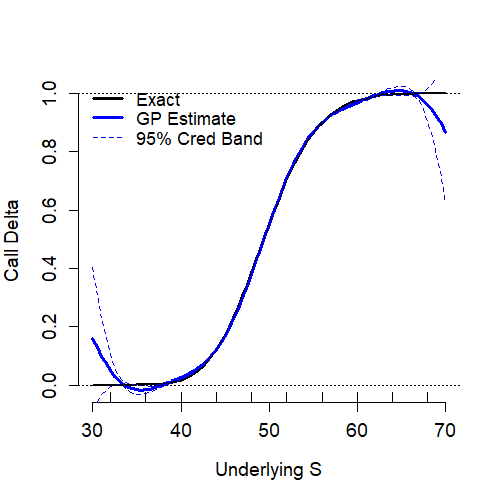}
		\end{center}
	\end{minipage} \\
$\tau = 0.5$ & $\tau = 0.2$
 \end{tabular}
	\caption{Estimated Delta $\widehat{\Delta}$ for a Black--Scholes Call. Left panel extrapolates for longer maturity $\tau=0.5$; right panel is an in-sample test set $\tau=0.2$. We also show the 95\% credible bands from \eqref{eq:ci}. 
	\label{fig:bs}}
\end{figure}

As expected, the credible bands on the right (the interpolation case, where credible bands are almost invisible in the middle) are narrower than on the left (extrapolation). For example at $S=55$ and $t=0.2$ we have $\widehat{\Delta}(t,S) = 0.8681$ with a credible band of $[0.8635, 0.8727]$ (the true Delta actually being $0.8642$) while at same $S=55$ and $\tau=0.5$ we have $\widehat{\Delta}(t,S) = 0.7997$ (ground truth being $0.7936$) and a credible band of $[0.7779, 0.8215]$ more than 4 times wider. In other words, this particular GP surrogate is able to estimate Delta up to $\pm 0.004$ in the middle of the training set, but only up to $\pm 0.022$ when asked to extrapolate for longer maturity. This reflects the key feature of GPs that the fitted model is ``self-aware'' and more confident in its estimate in regions that are close to training locations. The latter notion of closeness is algebraically reflected in the fitted covariance kernel $\kappa$, specifically its lengthscale $\ell_1$. Figure~\ref{fig:bs} moreover visualizes the dependence of uncertainty quantification on $S$: in the middle of the training set $S \in [40,60]$, the bands are very tight, indicating that the fitted GP has a high confidence regarding $\Delta(t,S)$. The bands get progressively wider at the edges.

To fully explain Figure~\ref{fig:bs}, we need to give the specification of the fitted GP described by \eqref{eq:gp_mean}. This includes the GP kernel $\kappa(\bx,\bx')$, the mean function $m(\bx)$, and the respective coefficients or hyperparameters $\vartheta$. In the figure, the mean function is $m(\bx) = \beta_0 + \beta_1 S =  -20.04 + 0.58 S$ and the kernel is squared-exponential~\eqref{eqn:se-kernel} with length-scales $\ell_1 = 0.626, \ell_2 = 10.00$, process variance $\sigma_p^2 = 239.71$ and noise variance $\sigma^2_\epsilon = 1.99 \cdot 10^{-4}$.  In this example, although the training outputs $Y^i$ are exact, for numerical purposes (namely to stabilize matrix inversion), we allow for a strictly positive observation noise $\sigma_\epsilon$ in \eqref{eq:noise}. The variance parameter was taken to be an unknown constant and learned as part of maximum likelihood estimation. 
The MLE was carried out using a genetic-based optimizer from package \texttt{rgenoud} in \texttt{R} and the overall GP fitting via the \texttt{DiceKriging}~\citep{DiceKriging} package.

\begin{remark}
{While the example above considers a very simple payoff, our approach trivially generalizes to arbitrary payoff structures, including portfolios of options with varying maturities and strikes. Since the surrogate construction is completely independent of the specifics of the price function $P(t,S)$ and samples $Y^i$, going from a Call option above to a collection of contracts with different $(T_j, K_j), j=1,\ldots, J$ only requires adjusting the code that provides the sample $Y^i$ while the rest proceeds as-is. Surrogates become attractive for computing sensitivities of large option portfolios even in analytic models, since they have a fixed evaluation cost, while the cost of evaluating a single $P(t,S)$ is linear in the number of contracts $J$ and becomes non-negligible for $J$ large.}
\end{remark}

\subsection{Observation Noise}
In the cases where $P(t,S)$ is not available exactly, the associated uncertainty will typically depend on $(t,S)$.
For real-life training datasets this would be due to varying bid-ask spreads that are driven by contract liquidity. For Monte Carlo based datasets, this would be due to the heteroskedastic conditional variance of the payoff as a function of $S$. For example, for a Call the conditional simulation variance $\sigma^2(\bx)$ tends to be higher in-the-money, since out-of-the-money nearly all empirical payoffs would be zero, so that $\sigma^2(\bx) \simeq 0$ for $S \ll K$.
GPs are able to straightforwardly handle non-constant $\sigma^2(\bx)$, this just requires replacing the term $\sigma^2_\epsilon \bm{I}$ with a diagonal matrix $\bm\Sigma$ with $\bm\Sigma_{ii} \equiv \sigma^2(\bx^i)$ in \eqref{eq:gp_mean}.
  For the case where the training set is model-based Monte Carlo (Setting b), we may estimate $\sigma^2(\bx^i)$ via the empirical standard deviation that corresponds to the empirical average for $Y^i$:
   \begin{align}\label{eq:inner-mc}
   Y^i &= \frac{1}{\check{N}} \sum_{j=1}^{\check{N}} \Phi (S^{j,i}_T), \\ \label{eq:emp-sigma}
   \hat{\sigma}^2(\bx^i) &= \frac{1}{\check{N}-1} \sum_{j=1}^{\check{N}} \left(\Phi(S^{j,i}_T) - Y^i \right)^2,
   \end{align}
   where $S^{j,i}_T \sim S_T | S_0 = S^i$ are i.i.d samples and $\Phi(S) = (S-K)_+$ is the Call payoff function.

   Plugging in $\diag \hat{\sigma}(\bx^i) \bm{I} $ into $\sigma^2_\epsilon \bm{I}$ in \eqref{eq:gp_mean}-\eqref{eq:gp_cov} is known as the Stochastic Kriging approach, see \cite{Ankenman}. This plug-in $\hat{\sigma}^2$ works well as long as $\check{N}$ is sufficiently large. The package \texttt{hetGP} \citep{BGL18} extends the idea of \eqref{eq:emp-sigma} to simultaneously learn $\sigma^2(\cdot)$ and $P(t,S)$ during the fitting step.

The baseline alternative is to assume a constant observation noise $\sigma_\epsilon$ which is augmented to the GP hyperparameters and estimated as part of MLE optimization. This is also the recommended approach for noiseless observations, where a small amount of noise (the so-called ``nugget'') is added (learned via MLE) in order to regularize the optimization of the other hyperparameters.

\subsection{Virtual Training Points}

The GP model has no a priori information about the properties of $P(t, S)$ and its fit is fully driven by the training data and the postulated prior mean $m(t,S)$. One way to improve the fit is by adding \emph{virtual} observations that reflect the structural properties. In particular, we can create ``boundary'' conditions by putting virtual (in the sense of not coming from any data) observations at the edges of the training space; see the right panel of Figure~\ref{fig:lv}. This ensures more stable and more confident estimates at extreme values of inputs.

Specifically, in our case studies below we:
\begin{itemize}
  \item Add virtual points deep-in-the-money to enforce $\widehat{\Delta}(t,S) \simeq 1$ in that region. This is achieved by adding $\tilde{y}^i = S^i- e^{-r(T-t^i)}K$
  at two close but distinct, large $S^i$'s.

  \item Add virtual points deep out-of-the-money to enforce $\widehat{P}(t,S) \simeq 0$ and therefore $\widehat{\Delta}(t, S) \simeq 0$. This is achieved by adding $\tilde{y}^i = 0$ for two close but distinct, small $S^i$'s.

  \item Add virtual points at contract maturity $\tilde{y}^i = (S^i-K)_+$ for $t^i = T$. This enforces the correct shape of $\widehat{P}(t,S)$ as $t \to T$, in particular the at-the-money kink of the Call payoff.
\end{itemize}

Above, we use virtual price observations; it is also possible to add virtual observations on the gradients of a GP, which however requires a much more involved model fitting.

\section{Case Studies}\label{sec:tests}

In this section we present the set up of two in-depth case studies, explaining the underlying stochastic models, implementation, and assessment metrics.

\subsection{Black--Scholes}

Our first test environment is a Black--Scholes model which provides a ground truth and hence ability to compute related exact errors. Moreover, we can generate arbitrary amount/shape of training data and observation noise.

We consider European Call options, priced via the classical Black--Scholes formula that also yields closed-form expressions for the Delta, Theta and Gamma. 
For the training data $Y^i$ we use Monte Carlo simulation of size $\check{N}$ and a plain sample average estimator:
\begin{align}\label{eq:bs-mc}
\mathbb{E}^Q[ e^{-r(T-t)}(S_T - K)_+] \simeq \frac{1}{\check{N}} \sum_{n=1}^{\check{N}} e^{-r(T-t)} (S^n_T - K)_+ =: Y^i,
\end{align}
where $S^n_T$ are i.i.d.~samples obtained using the log-normal distribution of $S_T$. While carrying out the Monte Carlo method (which is a proxy for any computationally heavy pricing engine), we also record the empirical standard deviation of the $\check{N}$ payoffs to obtain  the plug-in estimator $\hat{\sigma}(\bx^i)$ for the input-dependent noise variance parameter as in \eqref{eq:emp-sigma}.

We use $r=0.04$, $K=50$, $T=0.4$; for out-of-sample Delta hedging we assume asset $\mathbb{P}$-drift of $\mu = 0.06$ and initialize with $S_0 \sim {\cal N}(50,2)$.

\subsection{Local Volatility Model}
For our second case study we consider a nonlinear local volatility model, where Call price $P(t, S)$ is available only via Monte Carlo simulation. In this setup there is no direct ground truth and we obtain a pointwise ``gold standard'' estimate of $\Delta(t,S)$ through a large-scale, computationally expensive Monte Carlo simulation combined with a finite-difference approximation.

We consider the Local Volatility (LV) model  where the dynamics of $S$ under the physical measure is
\begin{align}\label{eq:sde}
dS_t = \mu S_t dt + \sigma(t,S_t) S_t dB_t,
\end{align}
where $B$ is a Brownian motion. Specifically, in numerical example in Section \ref{sec:lv}, we use the following piecewise local volatility function, see Figure~\ref{fig:lv}:
$$\sigma(t,S) := \begin{cases}
0.4 - 0.16 \ e^{-0.5 (T^*-t)}  \cos(1.25 \pi \log \frac{S}{S^*}), \qquad\mbox{ if }\; |\log \frac{S}{S^*}| < 0.4,\\
0.4,  \qquad\mbox{ if }\; |\log \frac{S}{S^*}| \geq 0.4,
\end{cases}$$
where $S^* = 50$ and $T^* = 0.4$.
The risk-free interest rate $r$ is set to 0.05 and the rate of return of $S$ is $\mu = 0.13$. In order to compute a gold-standard benchmark computation for the Delta, we use the central finite-difference approximation
$$
\widehat{\Delta}^{fd}(t,S; \delta) := \frac{ \widehat{P}(t,S+\delta) - \widehat{P}(t,S-\delta)}{2\delta},
$$
with discretization parameter $\delta = 0.01S_0$.
The two terms on the right hand side are computed via a Monte Carlo simulation with same stochastic shocks. Namely, we approximate $P(t,S\pm \delta)$ by the empirical average over $10^6$ paths simulated from the dynamics of $S$ described in (\ref{eq:sde}) with an Euler--Maruyama discretization with $\Delta t = T/100$ and with two different initial values $S_0 \pm \delta$, and the same sequence of randomly sampled $\Delta B_{t_i}$. One should notice that to implement this benchmark procedure in reality it is necessary to calibrate the local volatility function to the market data. This step is completely avoided with our GP methodology.

We again consider a Call option with strike $K=50$ and maturity up to $T=0.4$.

\begin{figure}[h!]
  			\centering \includegraphics[width=0.5\textwidth,trim=0.2in 0.2in 0.2in 1in]{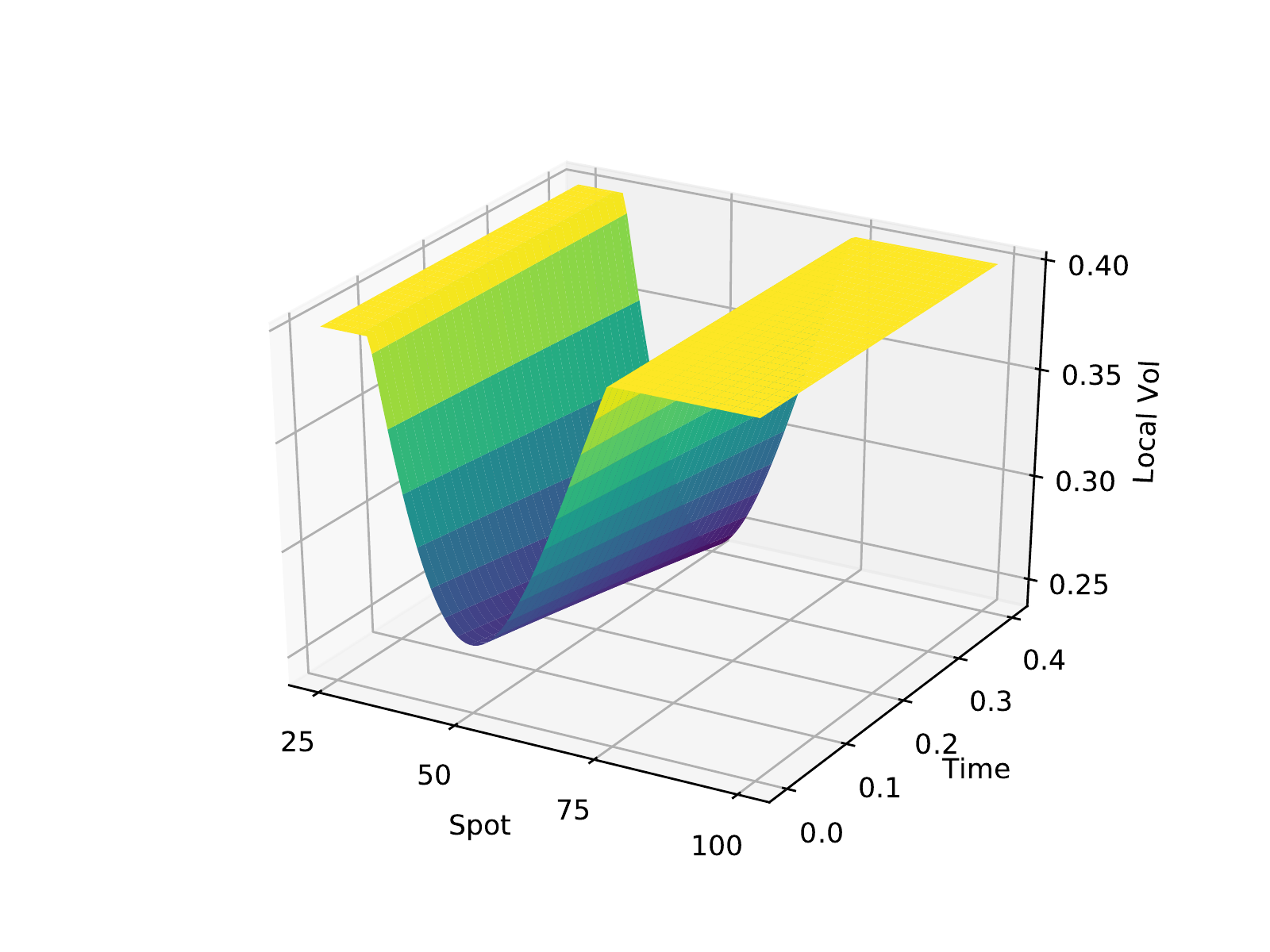} \includegraphics[width=0.45\textwidth]{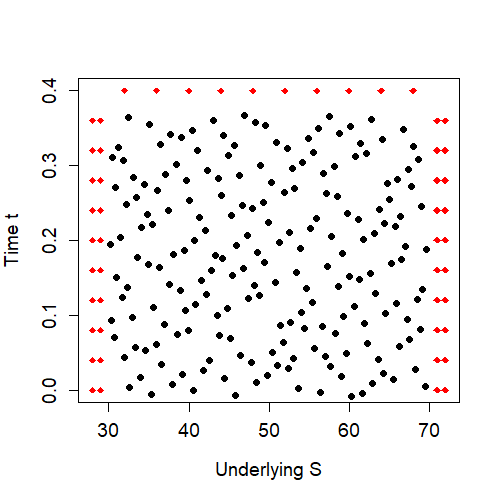}
\caption{ \label{fig:lv} Left: Local Volatility Surface $\sigma(t,S)$. Right: the training set (black $\cdot$'s) together with the virtual training points (red diamonds). Total design size is 250, with 200 points generated from a space-filling 2D Halton sequence on $[30,70] \times [-0.01, 0.37]$ and 50 virtual sites along 3 edges: 20 ITM, 20 OTM and 10 at maturity.}
\end{figure}

\subsection{Assessing Discrete Delta Hedging}\label{sec:delta-hedge}

To assess hedging quality, we implement a discrete-time delta hedging strategy which consists of rebalancing between the stock and the bank account based on a time step $\Delta t$ and the estimated $\widehat{\Delta}(t,S_t)$. We start the hedge at $t=0$ with the given wealth $W_0 = P(0,S_0)$ and update $W_t$ according to
\begin{align*}
  W_{t_k} = S_{t_k} \widehat\Delta(t_{k-1}, S_{t_{k-1}}) + (W_{t_{k-1}}-S_{t_{k-1}} \widehat\Delta(t_{k-1}, S_{t_{k-1}})) \cdot e^{r(t_k - t_{k-1})}.
\end{align*}
Repeating this along a discrete sequence of times $0 = t_0 < t_1 < \ldots < t_K = T$ we finally compare the payoff $\Phi(S_T)$ to the terminal wealth $W_T$, recording the resulting hedging error $E_T = W_T - \Phi(S_T)$. Note that due to time discretization, even though the market is complete in both case studies, we will have $E_T \neq 0$ almost surely and moreover the distribution of $E_T$ is affected by $\mathbb{P}$ since Delta hedging is done under the physical measure (stock drift is $\mu \neq r$).

Our primary comparator for hedging performance is the benchmark/true $\Delta(t, S)$. With continuous-time hedging, the latter yields an exact hedge (zero hedging error, almost surely) since the market is complete. 
The following proposition describes the hedging error $E_T$ when using $\widehat{\Delta}$ at discrete times.
\begin{proposition}\label{lemma:delta}
Under the local volatility model (\ref{eq:sde}), the hedging error when implementing the approximator $\widehat{\Delta}$ at discrete instants $0 = t_0 < t_1 < \cdots < t_K = T$ is given by
\begin{align}
E_T &= \underbrace{\sum_{k=0}^{K-1} \!\int_{t_k}^{t_{k+1}}(\Delta(t,S_t) - \Delta(t_k,S_{t_k})) dX_t}_{E_T^{(d)}} +\underbrace{\sum_{k=0}^{K-1} \! \int_{t_k}^{t_{k+1}}(\Delta(t_k,S_{t_k}) - \widehat{\Delta}(t_k,S_{t_k})) dX_t}_{\hat{E}_T},
\end{align}
where $dX_t = (\mu -r)S_tdt + \sigma(t,S_t)S_tdB_t$.
\end{proposition}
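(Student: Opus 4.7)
The natural approach is to compare the wealth process $W_t$ to the true option value $P(t,S_t)$ pathwise, exploiting the fact that under the local volatility model, $P$ satisfies the associated pricing PDE and hence Ito's formula produces a clean SDE for $P(t,S_t)$ driven by the excess-return process $X_t$.

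My plan is as follows. First, I would extend the discrete-time rebalancing rule given in Section~\ref{sec:delta-hedge} to a self-financing continuous-time description on each subinterval: on $[t_k, t_{k+1})$ the trader holds $\widehat{\Delta}(t_k,S_{t_k})$ shares of stock and the remaining wealth in the bank, yielding
\begin{equation*}
dW_t = rW_t\,dt + \widehat{\Delta}(t_k,S_{t_k})\,(dS_t - rS_t\,dt) = rW_t\,dt + \widehat{\Delta}(t_k,S_{t_k})\,dX_t.
\end{equation*}
Second, I would apply Ito's formula to $P(t,S_t)$ and invoke the local-volatility Black--Scholes PDE $\partial_t P + \tfrac12 \sigma^2(t,S) S^2 \partial_{SS} P + rS\,\partial_S P - rP = 0$ to absorb the drift and second-order term, obtaining
\begin{equation*}
dP(t,S_t) = rP(t,S_t)\,dt + \Delta(t,S_t)\,dX_t.
\end{equation*}

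The third step is the key bookkeeping move: set $Y_t := W_t - P(t,S_t)$ so that $Y_0 = W_0 - P(0,S_0) = 0$ and $Y_T = E_T$, then subtract the two displays above to get
\begin{equation*}
dY_t = rY_t\,dt + \bigl(\widehat{\Delta}(t_k,S_{t_k}) - \Delta(t,S_t)\bigr)\,dX_t, \qquad t\in[t_k,t_{k+1}).
\end{equation*}
Solving this linear SDE piecewise (the homogeneous part drops out because $Y_0=0$ and we track undiscounted error) gives the total hedging error as a single sum of stochastic integrals of $(\widehat{\Delta}(t_k,S_{t_k}) - \Delta(t,S_t))$ against $dX_t$. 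Finally, I would add and subtract $\Delta(t_k,S_{t_k})$ inside each integrand and split into the two announced pieces: the \emph{discretization} error $E_T^{(d)}$, which measures the drift of the true Delta away from its value at the last rebalancing instant, and the \emph{approximation} error $\hat{E}_T$, which isolates the effect of replacing $\Delta$ by the surrogate $\widehat{\Delta}$ at the grid points.

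The main subtlety I expect is the treatment of the bank-account accrual $rY_t\,dt$: a careful application of the integrating factor $e^{-rt}$ gives a discounted identity, and one must argue that for the purposes of expressing $E_T$ the accrual terms cancel between $W_t$ and $P(t,S_t)$ because both grow at the same risk-free rate between rebalancing events. A secondary mild check is verifying that the piecewise self-financing description is consistent with the discrete update in Section~\ref{sec:delta-hedge}; this is a direct consequence of integrating the bank-account ODE on each $[t_k,t_{k+1}]$ with the stock-holdings held fixed, and should be noted but not dwelt upon.
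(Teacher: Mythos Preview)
Your overall route is the same as the paper's: write the dynamics of $P(t,S_t)$ via It\^o plus the pricing PDE, write the dynamics of the hedge, subtract, and then add and subtract $\Delta(t_k,S_{t_k})$ inside each integral to split into $E_T^{(d)}$ and $\hat{E}_T$. The paper in fact starts one step further along: it uses the replication identity
\[
dP(t,S_t)=\Delta(t,S_t)\,dS_t+r\bigl(P(t,S_t)-\Delta(t,S_t)S_t\bigr)\,dt
\]
and then \emph{defines} the error process via
\[
dE_t=dP(t,S_t)-\widehat{\Delta}(t,S_t)\,dS_t-r\bigl(P(t,S_t)-\widehat{\Delta}(t,S_t)S_t\bigr)\,dt,
\]
so that the bank-account term carries $P$ rather than $W$ and one obtains $dE_t=(\Delta-\widehat{\Delta})\,dX_t$ directly, with no residual drift.

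Where your proposal has a genuine gap is exactly the point you flag as a subtlety. From $dY_t=rY_t\,dt+(\widehat{\Delta}(t_k,S_{t_k})-\Delta(t,S_t))\,dX_t$ with $Y_0=0$, the homogeneous part does \emph{not} drop out: the solution is
\[
Y_T=\int_0^T e^{r(T-s)}\bigl(\widehat{\Delta}(t_{k(s)},S_{t_{k(s)}})-\Delta(s,S_s)\bigr)\,dX_s,
\]
so an $e^{r(T-s)}$ factor survives. The heuristic that ``both $W_t$ and $P(t,S_t)$ grow at the risk-free rate'' produces precisely the $rY_t\,dt$ term, it does not cancel it. To land on the Proposition as stated you must adopt the paper's convention (bank balance taken as $P-\widehat{\Delta}S$ in the error dynamics, i.e.\ a non-compounding tracking error), or equivalently work with the discounted quantities $e^{-rt}W_t$ and $e^{-rt}P(t,S_t)$ and state the result for the discounted error. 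Your integrating-factor remark is the right diagnosis; it is the conclusion that the accrual ``cancels'' that should be replaced by one of these two fixes.
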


Proposition~\ref{lemma:delta} shows that the overall hedging error $E_T$ can be decomposed into two parts: the first, denoted $E_T^{(d)}$, is solely explained by the discrete-time aspect of the hedging strategy while the second one, $\hat{E}_T$ is driven by the approximation of the Delta, $\widehat{\Delta}$. Notice that $\hat{E}_T = \sum_{k=0}^{K-1} (\Delta(t_k,S_{t_k}) - \widehat{\Delta}(t_k,S_{t_k}))(X_{t_{k+1}} - X_{t_k})$. 
Since we are interested in studying the impact of $\widehat{\Delta}$ vs $\Delta$, we focus our analysis on $\hat{E}_T$. Taking the first and second moment, we have that $\mathbb{E}[E_T] = \mathbb{E}[E_T^{(d)}] + \mathbb{E}[\hat{E}_T]$ and $\mathrm{Var}(E_T) = \mathrm{Var}(E_T^{(d)}) + \mathrm{Var}(\hat{E}_T) + 2\mathrm{Cov}\left(E_T^{(d)}, \hat{E}_T \right)$. The next Corollary addresses the contribution from $\hat{E}_T$.

\begin{corollary}\label{cor:moments}
The mean and variance of $\hat{E}_T$ are given by
\begin{align}\label{eq:ET}
&\mathbb{E}[\hat{E}_T] =  (\mu - r) \sum_{k=0}^{K-1} \int_{t_k}^{t_{k+1}}\mathbb{E}\left[(\Delta(t_k,S_{t_k}) - \widehat{\Delta}(t_k,S_{t_k})) S_t\right] dt ,\\ \label{eq:VarE}
&\mathrm{Var}(\hat{E}_T) = \sum_{k=0}^{K-1} \int_{t_k}^{t_{k+1}}  \mathbb{E}\left[(\Delta(t_k,S_{t_k}) - \widehat{\Delta}(t_k,S_{t_k}))^2 \sigma^2(t,S_t) S_t^2 \right]dt.
\end{align}
Moreover, the covariance between $E_T^{(d)}$ and $\hat{E}_T$ is
\begin{align*}
&\mathrm{Cov}(E_T^{(d)}, \hat{E}_T) = \sum_{k=0}^{K-1} \int_{t_k}^{t_{k+1}}  \mathbb{E}\left[(\Delta(t,S_t) - \Delta(t_k,S_{t_k}))(\Delta(t_k,S_{t_k}) - \widehat{\Delta}(t_k,S_{t_k})) \sigma^2(t,S_t) S_t^2 \right]dt.
\end{align*}
\end{corollary}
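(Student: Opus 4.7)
The plan is to use the explicit representation of $\hat{E}_T$ from Proposition~\ref{lemma:delta} together with Itô isometry, Fubini, and the martingale property. Setting $\xi_k := \Delta(t_k,S_{t_k}) - \widehat{\Delta}(t_k,S_{t_k})$, which is $\mathcal{F}_{t_k}$-measurable and constant on $[t_k,t_{k+1}]$, and splitting $dX_t$ into its drift and martingale parts, I first rewrite
$$\hat{E}_T = (\mu-r)\sum_{k=0}^{K-1}\xi_k \int_{t_k}^{t_{k+1}}\!\! S_t\,dt \;+\; \sum_{k=0}^{K-1}\xi_k \int_{t_k}^{t_{k+1}}\!\! \sigma(t,S_t)S_t\,dB_t =: \hat{E}_T^{\mathrm{drift}} + M.$$
For the mean, each Itô integral has zero expectation (the integrand $\sigma(t,S_t)S_t$ is adapted with finite second moment, and $\xi_k$ is $\mathcal{F}_{t_k}$-measurable), so only $\hat{E}_T^{\mathrm{drift}}$ contributes. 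Pulling $\xi_k$ into the time integral and swapping expectation and integration via Fubini immediately yields \eqref{eq:ET}.

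For the variance, the key observation is that $t\mapsto \sum_k \xi_k \mathbf{1}_{(t_k,t_{k+1}]}(t)\,\sigma(t,S_t)S_t$ is adapted, so $M$ is a genuine Itô integral on $[0,T]$. Itô's isometry gives
$$\mathbb{E}[M^2] = \sum_{k=0}^{K-1}\int_{t_k}^{t_{k+1}}\mathbb{E}\bigl[\xi_k^2\,\sigma^2(t,S_t)S_t^2\bigr]\,dt,$$
which is exactly \eqref{eq:VarE}. Cross-terms between sub-intervals automatically vanish under the isometry because the indicators $\mathbf{1}_{(t_k,t_{k+1}]}$ have disjoint support. The covariance formula follows the same template: writing $E_T^{(d)}$ as a single Itô integral with integrand $\zeta(t):=\sum_k(\Delta(t,S_t)-\Delta(t_k,S_{t_k}))\mathbf{1}_{(t_k,t_{k+1}]}(t)$ times the martingale part of $dX_t$, and applying Itô's isometry to $\mathbb{E}[M \cdot (\text{martingale part of }E_T^{(d)})]$, produces $\int_0^T \mathbb{E}[\xi(t)\zeta(t)\sigma^2(t,S_t)S_t^2]\,dt$, which is the displayed covariance.

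The main delicate point, and the only step that is not a direct manipulation, is that the stated variance and covariance retain only the martingale contribution and discard the drift piece $\hat{E}_T^{\mathrm{drift}}$: its variance and its covariance with $M$ are nonzero in general, but are of order $(\Delta t)^2$ and $(\Delta t)^{3/2}$ respectively per sub-interval, whereas the martingale variance is of order $\Delta t$. The formulas are therefore most naturally read as the leading-order (in mesh size) contribution, or equivalently as the quadratic-variation-based decomposition in which the finite-variation drift is formally discarded; making this interpretation explicit is what I expect to require the most care in the write-up.
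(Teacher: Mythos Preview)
Your approach is correct and essentially the same as the paper's, which is a one-line proof (``conditioning on $S_{t_k}$ and using the first two moments of $X$''): both amount to splitting $dX_t$ into its drift and martingale parts, taking expectations, and invoking It\^o isometry. You are in fact more careful than the paper: your observation that the stated variance and covariance formulas retain only the martingale contribution and drop the drift piece $\hat{E}_T^{\mathrm{drift}}$ (whose variance is $O((\Delta t)^2)$ per sub-interval versus $O(\Delta t)$ for the martingale part) is correct and is a genuine subtlety that the paper's terse proof does not acknowledge.
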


The proofs of Proposition~\ref{lemma:delta} and Corollary~\ref{cor:moments} can be found in Appendix~\ref{app:approximated_delta}.

\begin{remark}\label{rem:varE}

Suppose $\widehat{\Delta}$ is an unbiased estimator of $\Delta$ in the following sense:
$$\mathbb{E}[\widehat{\Delta}(t_k,S_{t_k})|S_{t_k}] = \Delta(t_k,S_{t_k}), \quad\mathbb{E}[(\Delta(t_k,S_{t_k}) - \widehat{\Delta}(t_k,S_{t_k}))^2|S_{t_k}] = \sigma^2,$$
and, for any $t \in (t_k,t_{k+1}]$ and any bounded Borel-measurable functions $\phi$ and $\psi$,
$$\mathbb{E}[\psi(\Delta(t_k,S_{t_k}) - \widehat{\Delta}(t_k,S_{t_k})) \phi(S_t)|S_{t_k}] = \mathbb{E}[\psi(\Delta(t_k,S_{t_k}) - \widehat{\Delta}(t_k,S_{t_k}))|S_{t_k}] \mathbb{E}[\phi(S_t)|S_{t_k}],$$
for every $k \in \{0,\ldots,K-1\}$. In that case, conditioning on $S_{t_k}$ we find
\begin{align*}
\mathbb{E}[\hat{E}_T] = 0, \,
\mbox{Var}(\hat{E}_T) = \sigma^2 \mathbb{E}\left[\langle S \rangle_T \right]  \mbox{ and } \mathrm{Cov}(E_T^{(d)}, \hat{E}_T)  = 0.
\end{align*}
Thus, for unbiased $\widehat{\Delta}$, we expect to see no additional hedging loss and additional hedging variance that is proportional to the approximation variance. In other words, good $\Delta$ approximators should not impact expected hedging loss; while the mean-squared error of $\widehat{\Delta}$ is a proxy for the variance of the hedging loss.
\end{remark}

\begin{remark}\label{rem:continuous}

Under continuous-time Delta hedging, we have $E_T^{(d)} = 0$ and $E_T = \int_0^T (\Delta(t,S_t) - \widehat{\Delta}(t,S_t)) dX_t$. Moreover, if the true Delta is known but hedging is done discretely in time, then $\hat{E}_T = 0$ and $E_T = E_T^{(d)}$.
\end{remark}

Another delta-hedging strategy is the so-called implied Delta hedging which relies on the  Black--Scholes delta with the current implied volatility. Let $IV(t,S)$ denote the implied volatility satisfying
$$P(t,S) =: {P}^{BS}(t,S,IV(t,S)),$$
where ${P}^{BS}(t,S, \sigma)$ is the Black--Scholes formula price for this option with volatility $\sigma$. Then Implied Delta is
\begin{align}
  \Delta_I(t,S) := \Delta^{BS}(t,S, IV(t,S)).
\end{align}
Note that the following is true
$$\Delta(t,S) = \Delta_I(t,S) + \mathcal{V}^{BS}(t,S, IV(t,S)) \frac{\partial IV}{\partial S}(t,S),$$
where $\mathcal{V}^{BS}$ is the Black--Scholes Vega. So the difference between the true and implied delta is linked to the option Vega and the implied volatility skew. Practically speaking, the implied volatility is a local average of $\sigma(t,S)$. For the local volatility case study, the implied Delta is too low OTM and too high ITM, generating a non-negligible hedging error as a result. The latter feature demonstrates the importance of properly learning option sensitivities, rather than just calibrating the immediate implied volatility surface.

In terms of Proposition \ref{lemma:delta} we note that the Implied Delta is not unbiased and the variance of the approximation part of the hedging error can be written as
\small
\begin{align}
\mathrm{Var}(\hat{E}^I_T) &= \sum_{k=0}^{K-1} \int_{t_k}^{t_{k+1}} \mathbb{E}\left[ \mathcal{V}^{BS}(t_k,S_{t_k}, IV(t_k,S_{t_k}))^2 \left(\frac{\partial IV}{\partial S}(t_k,S_{t_k})\right)^2 \sigma^2(t,S_t) S_t^2  \right]dt.
\end{align}
\normalsize

\subsection{Performance Metrics}\label{sec:assess}
Let us denote by $\widehat{\Delta}(t,S)$ a given GP-based estimate of $\Delta(t,S)$. We define the following performance metrics to assess the quality of $\widehat{\Delta}(t,S)$. In all cases, $\cD'$ refers to a discrete test set {of size $N'$}.

\textbf{Metric I: RIMSE.} Assuming that a gold-standard (possibly exact) $\Delta(t,S)$ is available, we compare $\widehat{\Delta}(t,S)$ to the ground truth $\Delta(t,S)$. Our main choice is the root integrated mean-squared error (RIMSE) defined as:
\begin{align}\label{eq:rimse}
  \mathrm{RIMSE}^2 := \frac{1}{N'}\sum_{(t,S) \in \cD'} (\widehat{\Delta}(t,S) - \Delta(t,S))^2,
\end{align}
for a test set $\cD'$. RIMSE is the standard $L_2$ criterion for judging the quality of $\widehat{\Delta}$ over a region of interest. We can similarly define the RIMSE for greek/sensitivity  $\Theta$ and for the option price $P$ itself, denoted as $\widehat\Theta_{Err}$ and $\widehat{P}_{Err}$.

\textbf{Metric II: PnL.} We also measure the quality of $\widehat{\Delta}(t,S)$ directly through the Delta-hedging P\&L. {We report the variance of the terminal
P\&L $E^n_T = \Phi(S^n_T)-W^n_T$. $Var(E_T)$ is available even without a ground truth.
Better Delta forecasts should lead to lower variability of hedging errors, but $Var(E_T)$ is always bounded away from zero due to time discretization.}

\textbf{Metric III: MAD.} We observe that the Greek estimators tend to have a few small regions of large errors around the edges of $\cD'$ which inflates RIMSE in \eqref{eq:rimse}. To mitigate this effect, we evaluate the Median Absolute Deviation (MAD) metric
\begin{align}\label{eq:mad}
  \mathrm{MAD} := \mathrm{Median}_{(t,S) \in \cD'} |\widehat{\Delta}(t,S) - \Delta(t,S))|,
\end{align}
where the median is over the discrete test set $\cD'$. Thus, the $L_1$ approximation error will be less than MAD at half the test sites.

\textbf{Metric IV: Coverage.} To assess the uncertainty quantification provided by the GP model, we evaluate the accuracy of the associated credible bands. Specifically, a good model will be close to matching the nominal coverage of its bands, i.e.~the ground truth should be within the 95\% credible bands at 95\% of the test locations, cf~\eqref{eq:ci}:
\begin{align}\label{eq:coverage}
\mathrm{Cvr} := \frac{1}{N'} \sum_{(t,S) \in \cD'} 1_{ \left( {\Delta}(t, S) \in \mathrm{CI}_{0.95}(t, S) \right)}.
\end{align}
A model with $\mathrm{Cvr} < 0.95$ has overly narrow credible bands and a model with $\mathrm{Cvr} > 0.95$ has them too wide.

\textbf{Metric V: NLPD.} The Negative Log Probability Density metric blends the testing of the posterior mean (via MSE) and of the posterior standard deviation:
\begin{align}\label{eq:nlpd}
  \mathrm{NLPD}(t,S) := \frac{ (\Delta(t,S) -\widehat{\Delta}(t,S))^2}{V_g(t,S)} + \log V_g(t,S).
\end{align}
where $V_g$ is the posterior variance of the Delta estimator, see Section \ref{sec:greek}. Better models will have lower NLPD. NLPD can be viewed as combining RIMSE and Coverage.

\textbf{Metric VI: Bias.} To assess whether the estimator tends to consistently over- or under-estimate the true Greek, we record its statistical bias:

\begin{align}
  \mathrm{Bias} := \frac{1}{N'} \sum_{(t,S) \in \cD'} (\widehat{\Delta}(t,S) - \Delta(t,S)).
\end{align}
Since our $\widehat{\Delta}$ are statistically constructed, we expect minimal bias.

\textbf{Metric VII: Empirical Moments of $\hat{E}_T$. } Reflecting Corollary~\ref{cor:moments} we evaluate the following two quantities related to the hedging loss:
\begin{align}\label{eq:muE}
  \mu_E &:= (\mu-r) \cdot T \cdot \frac{1}{N'} \sum_{(t,S) \in \cD''} [\widehat{\Delta}(t,S) - \Delta(t,S)]S f_{S_t}(S); \\ \label{eq:VE}
  V_E &:= T \cdot \frac{1}{N'} \sum_{(t,S) \in \cD''} (\widehat{\Delta}(t,S) - \Delta(t,S))^2 \sigma(t,S)^2 S^2 f_{S_t}(S),
\end{align}
where $f_{S_t}(\cdot)$ is the probability density function of $S_t$. Thus, $\mu_E$ is an empirical proxy for the average extra hedging loss $\mathbb{E}[\hat{E}_T]$ due to $\widehat{\Delta}$ and $V_E$ is an empirical proxy for the respective additional hedging variance $\mathrm{Var}[\hat{E}_T]$. Good models should have $\mu_E \simeq 0$ and low $V_E$.

We generate $\cD''$ by forwarding simulating $(S_t)$ trajectories which allows us to drop the $f_{S_t}$ term. In both case we use $\Delta t = 0.02$ to sum over $t_k$ in
\eqref{eq:ET}-\eqref{eq:VarE} .

\medskip
For the Black--Scholes case study we use a test set of $N'=|\cD'| =1600$ sites constructed as a grid on $\{-0.01, 0.01, \ldots, 0.37 \} \times \{30, 30.5, \ldots, 69.5\}$ and a nominal coverage level of 95\% for \eqref{eq:coverage}. For the LV case study we use a test set of $N'=341$ sites with 11 time-steps $t \in \{0, 0.04, \ldots, 0.36, 0.4\}$ and 31 stock price levels $S \in \{29.4, 31.03, \ldots, 78.4\}$. {We do not report running times since those are highly dependent on the hardware used, as well as $N$, number of inner Monte Carlo simulations $\check{N}$, number of test locations $N'$, and the complexity of the contract (or portfolio of contracts). As a guide, using a \texttt{R}-based prototype implementation on a 2018-vintage laptop, it takes a dozen of seconds to fit a GP model with $N=200$. It then takes another handful of seconds to evaluate the Greeks on the above test set. For larger training sets with $N \simeq 400$, fitting takes a bit over a minute. Significant hardware and software improvements are feasible for an industrial-grade deployment.}

\section{Results}\label{sec:bs}

In this section we present the experimental results based on the two case studies described above. We start with the Black--Scholes set up where ground truth is known and training inputs are noisy due to a Monte Carlo approximation.

\subsection{Choice of GP Kernel}
We first consider the impact of different GP model components on the quality of the Delta approximation. We begin with the role of the kernel family which is the most important choice to be made by the user. To do so, we compare the use of SE, M52 and M32 families, each of which is fitted in turn via MLE. Recall that these three families imply different degree of smoothness in $\hat{P}$ (and hence in the fitted Greeks): squared-exponential kernel will lead to very smooth fits, while Mat\'ern kernels allow more roughness.

Figure~\ref{fig:kernel-bs} shows the fits and 95\% credible bands across the above 3 kernel families and different Greeks.  The results are further summarized in Table~\ref{tbl:kernel}. While the training is done jointly in the $t$ and $S$ dimension, we illustrate with one-dim plots that fix $t$ and show dependence in $S$ only.

The top left panel shows the error $\widehat{P}(t,S)-P(t,S)$ between the fitted and true option prices and therefore provides an immediate sense of the accuracy of the statistical surrogate. We observe that all three GP models perform well out-of-the-money (OTM) and the largest error is in-the-money (ITM). This phenomenon is driven by the higher conditional variance of training inputs $Y^i$ ITM where Monte Carlo estimates are less accurate. In essence, the observation noise is proportional to the price and hence estimating the latter is harder when $P(t,S)$ is higher.

\begin{figure}[!ht]
	\begin{tabular}{cc}
	\begin{minipage}[t]{0.45\linewidth}
		\begin{center}
			\includegraphics[width=1\textwidth,height=2.5in]{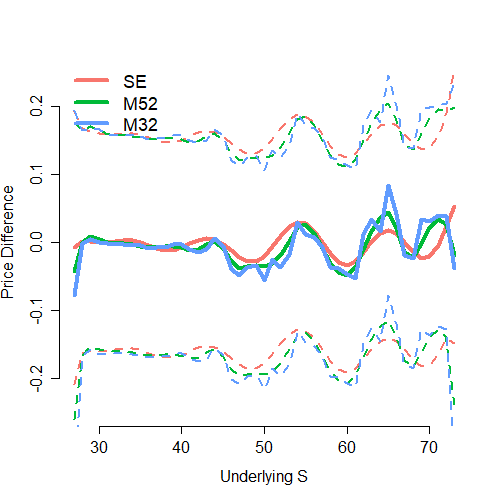}
		\end{center}
	\end{minipage} &
	\begin{minipage}[t]{0.45\linewidth}
		\begin{center}
			\includegraphics[width=1\textwidth,height=2.5in]{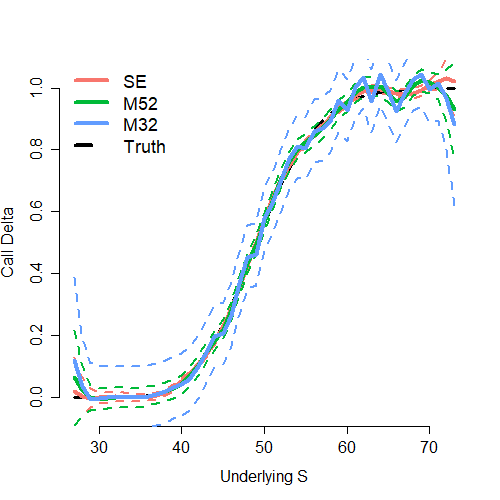}
		\end{center}
	\end{minipage} \\
Call Price Error & Delta $\partial P/\partial S$\\
\begin{minipage}[t]{0.45\linewidth}
		\begin{center}
			\includegraphics[width=1\textwidth,height=2.5in]{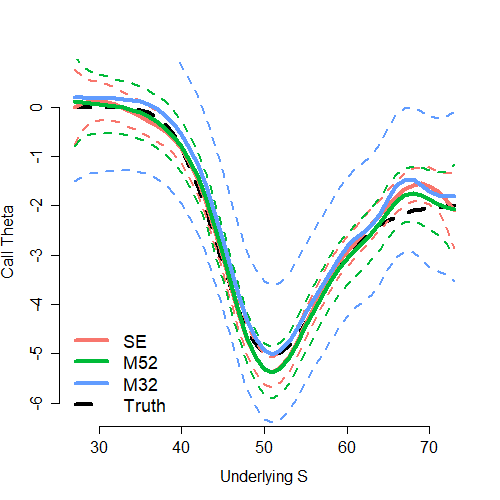}
		\end{center}
	\end{minipage} &
	\begin{minipage}[t]{0.45\linewidth}
		\begin{center}
			\includegraphics[width=1\textwidth,height=2.5in]{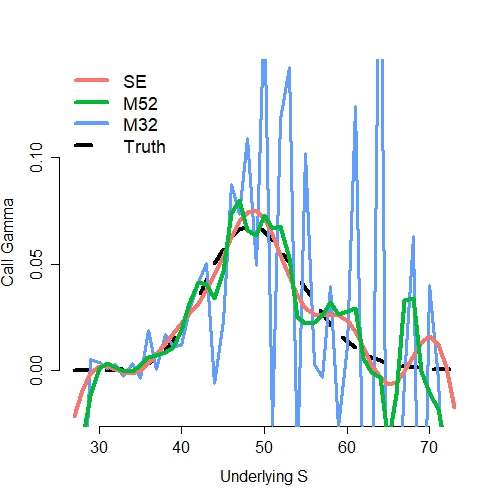}
		\end{center}
	\end{minipage} \\
Theta $-\partial P/\partial t$ & Gamma $\partial^2 P/\partial S^2$
 \end{tabular}
	\caption{Estimated sensitivities at $t=0.1$ together with their 95\% credible bands for a Black--Scholes Call across three different GP kernel families and using a space-filling experimental design. The Gamma is computed using a finite difference approximation. Ground truth indicated in dashed black line. Training set of $N=400$ inputs and $\check{N}=2500$ inner MC simulations.
	\label{fig:kernel-bs}}
\end{figure}

The top right panel displays the resulting $\widehat\Delta(t, \cdot)$'s which are simply the gradients of the respective surrogates $\widehat P(\cdot, \cdot)$ with respect to the second coordinate. In general, all three kernels perform very well, closely matching the true Delta. 
We observe the very narrow credible bands of the SE kernel compared to the M52 and M32 ones, with the latter having the widest credible band. The general observation (well known in the surrogate literature) is that the smoother is $m_*(\cdot)$, the tighter the CI. Consequently, all the CIs of a SE-family GP will always be narrower compared to M52 or M32. 
The other feature we see is oscillations of the M32-based Delta deep-ITM, and moreover that all models exhibit reversion to the prior beyond the edge of the training set, manifested by $\widehat{\Delta}(t,S) < 1$ for $S \gg 70$ and $\widehat{\Delta}(t,S) > 0$ for $S\ll 30$. The virtual training points are critical in avoiding this issue and enforce $\widehat{\Delta}(t,S) \simeq 1$ around $S=70$ and $\widehat{\Delta}(t,S) \simeq 0$ for $S\simeq 30$.

\begin{table}[htb]
	\caption{Effect of the GP kernel family on learning the Delta in a Black--Scholes model. We report 7  metrics for $\widehat{\Delta}$, as well as the RIMSE for $\Theta$ and option price $P$ (last 2 columns, cf.~\eqref{eq:rimse}). All metrics  are based on a gridded test set of $|\cD'| = 80 \times 20 = 1600$ sites.
	\label{tbl:kernel}}
	\begin{center} \vspace*{-0.2in}
				{ $$\begin{array}{lrrrrrrrrr}
					\hline
\text{Kernel} & \text{RIMSE} &  \text{MAD} & 95\% \text{Cvr} &  \text{Bias} & \text{NLPD} & \mu_E & V_E & \widehat\Theta_{Err} & \widehat{P}_{Err}  \\
					\hline
\text{SE}  & 0.0134 & 0.0070 & 0.7281 & 0.00043 & -6.32 & -0.00039 & 0.0504 & 0.597 & 0.028\\
\text{M52} & 0.0165 & 0.0075 & 0.9550 & -0.00004 & -7.25 & 0.00031 & 0.0484 & 0.674 & 0.031\\
\text{M32} & 0.0298 & 0.0124 & 0.9888 & 0.00009 & -5.58 & -0.00024 & 0.0771 & 0.753 & 0.035\\ \hline
				\end{array}$$}
	\end{center}\vspace*{-0.2in}
\end{table}

The bottom left panel illustrates the fitted $\widehat\Theta(t,\cdot)$ which uses  the exact same GP models as in the first row of the figure, simply computing the gradient in the other coordinate. This is one of the advantages of our framework---once fitted, all sensitivities across the different coordinates are obtained in the same consistent manner. Due to the more complex shape of the Theta, and in particular higher convexity of $P(t,S)$ in $t$, the quality of $\widehat{\Theta}$ is poorer compared to that of Delta. Both the SE and M52 overestimate the steep peak of $\Theta$ ATM, estimating $\widehat\Theta(0.1,50) \simeq -5.5$ rather than the true $-5$. We note that the M52/M32 surrogates are aware of this challenge and provide appropriately wide CI bands that contain the ground truth (in fact the M32 band is too wide). In contrast, the SE surrogate overestimates its posterior uncertainty, with the result that its coverage for $\Theta$ is much below the nominal 95\% level (i.e.~the CI frequently does not contain the ground truth). Another region where all models exhibit lack of fit is for $S \in [60,70]$.

Finally, the bottom right panel of Figure~\ref{fig:kernel-bs} illustrates the estimation of $\Gamma(t,S)$. Numerical estimation of second-order sensitivities is extremely challenging, especially through functional approximators. In that light, the SE and M52 GP surrogates perform quite well given that they were trained on just 400 noisy observations. We do observe significant oscillations in $\widehat\Gamma$ especially for $S \gg 55$, which is not surprising since the original $\widehat P$'s are not constrained in any way and tend to wiggle or vibrate in the input space. The oscillations are mild for the SE kernel (again, due to the tendency to over-smooth spatially) and are very severe for M32. We note that mathematically $m_*(\cdot)$ is only ${\cal C}^1$ for the Mat\'ern-3/2 family, and so there is actually \emph{no} second-order sensitivity for this surrogate. In the plot we obtain an approximation through finite differences, cf.~\eqref{eq:fd}, which are in fact the reason for the sharp oscillations. 
Table \ref{tbl:kernel} reports the error metrics defined in Section~\ref{sec:assess} for the above three surrogates.  We concentrate on the estimation of $\Delta$ (where we report 7 different metrics), as well as report the RIMSE for $\Theta$ and for option price itself, $P$. The surrogate utilizing a SE kernel appears to be best in terms of integrated mean squared error and also has slightly lower median absolute deviation and lower bias. However, it also has poor coverage suggesting that it is overconfident and reports too narrow credible bands. This is confirmed by the NLPD score that is worse than that for M52-based surrogate. The latter also beats SE in terms of RIMSE for $\Theta$ and essentially yields the same RIMSE for the price $P$. The M32-based surrogate is worst across the board, and also overestimates uncertainty (its coverage is much higher than 95\%).

To summarize, there are two key take-aways. On the one hand, the SE kernel \eqref{eqn:se-kernel} tends to over-smooth and therefore has trouble reproducing the spatial non-stationarity one observes for most option payoffs (namely high convexity ATM and almost linear deep ITM and deep-OTM). It also underestimates posterior uncertainty. On the other hand, the Mat\'ern-3/2 kernel tends to give CIs that are \emph{too} wide and by its nature is a very poor choice for second order sensitivities, like Gamma. In light of above, we recommend to use the Mat\'ern-5/2 kernel which provides the best compromise in terms of maximizing RIMSE and MAD, minimizing NLPD, and matching coverage.

\subsection{Size of Training Set}

Next, Table~\ref{tbl:n} shows how the size $N$ of the experimental design affects the fit. Naturally, a larger training set $\mathcal{D}$ provides more information and hence should yield a better fit. Consequently, larger $N$ should imply lower error metrics across the board (apart from the Coverage statistic that should converge to its nominal $95\%$ level).

\begin{table}[ht]
	\caption{Effect of training set size on learning the Delta and other Greeks in a Black--Scholes model. We report 8  metrics for $\widehat{\Delta}$, as well as the RIMSE for $\Theta$ and option price $P$ (last 2 columns). All metrics  are based on a gridded test set of $80 \times 20 = 1600$ sites, $\{ S_0: 30, 30.5, \ldots, 69.5\} \times \{t : -0.01, 0.01, \ldots, 0.37\}$. Training is based on \eqref{eq:bs-mc} with $\check{N}=2500$ inner simulations, plus 50 virtual training points, and the GP surrogates have Mat\'ern-5/2 kernel, linear trend function and estimated constant $\sigma_\epsilon$. {The reference hedging variance $Var(E_T)$ (7th column) using exact Delta is 0.265.}\vspace*{-0.2in}
	\label{tbl:n}}
	\begin{center}
		{\small $$\begin{array}{rrrrrrrrrrrrr}
					\hline \hline
					N & \text{RIMSE} &  \text{MAD} &   95\% \text{Cvr} &  \text{Bias} & \text{NLPD} & {Var(E_T)} & \mu_E & V_E &   \widehat\Theta_{Err} & \widehat{P}_{Err}  \\ \hline
80 &  0.0318 & 0.0136 &  0.9212 & -0.00044 & -5.884&  0.321 & 0.00079 & 0.0759 &0.753 & 0.054\\
120 &0.0183 & 0.0092 &  0.9900 & -0.00039 & -6.911 &  0.307 & 0.00004 & 0.0541&  0.653 & 0.034\\
160 &0.0157 & 0.0077 &  0.9869 & -0.00078 & -7.193 &  0.300 & -0.00022 & 0.0490& 0.670 & 0.032\\
200 &0.0150 & 0.0075 &  0.9856 & 0.00030 & -7.278 &   0.301 & -0.00015 & 0.0491&0.659 & 0.032\\
240 &0.0192 & 0.0099 &  0.9338 & 0.00057 & -6.919 &   0.311 & -0.00030 & 0.0616&0.677 & 0.034\\
280 &0.0175 & 0.0086 &  0.9650 & 0.00053 & -7.113 &   0.306& -0.00021 & 0.0570&0.671 & 0.033\\
320 &0.0154 & 0.0070 &  0.9812 & 0.00091 & -7.352 &   0.299 & -0.00027 & 0.0520&0.676 & 0.031\\
360 &0.0150 & 0.0067 &  0.9794 & 0.00052 & -7.410 &   0.300 & -0.00033 & 0.0528&0.661 & 0.030\\
400 &0.0142 & 0.0059 &  0.9800 & 0.00030 & -7.504 &   0.299& -0.00017 & 0.0517&0.663 & 0.028\\
\hline
\end{array}$$}
\end{center}\vspace*{-0.2in}
\end{table}

This pattern is generally observed in Table~\ref{tbl:n}; we find a roughly ${\cal O}(N^{-1/2})$ rate for RIMSE and MAD (both for Delta, as well as for Theta and Price, see the last two columns). The above trend is quite noisy because learning is not necessarily monotone in $N$ since the estimated GP hyperparameters change across datasets. As a result it is possible that a surrogate with higher $N$ has worse performance, compare $N=200$ and $N=240$ in Table~\ref{tbl:n}. This occurs because the estimation errors in GP surrogates tend to arise via small spurious oscillations in the predicted response in regions with sparse training data. As $\cD$ expands, those oscillations can shift abruptly as the MLE optimizer finds new local maxima for the hyperparameters.

 One very reassuring finding is that all surrogates are unbiased in their estimates of $\Delta$, even for very low $N$. Another feature we observe is that learning $\Theta$ is more challenging, with the respective RIMSE converging quite slowly. This is linked to the spatial nonstationarity, namely the fact that $S \mapsto \Theta(t,S)$ changes rapidly ATM but slowly ITM/OTM, and moreover goes to $-\infty$ at-the-money at maturity.

Another important observation is that the patterns in all the considered metrics (beyond NLPD/Coverage) are broadly similar and therefore RIMSE is a good overall proxy for approximation quality. In that sense, the standard mean squared error is sufficient for assessment of the point predictions for the Greeks; NLPD is a good complement for assessing uncertainty quantification.

\subsection{Simulation Design}

The GP surrogate is a data-driven spatial model and consequently is sensitive to the geometry of the training set. Therefore, we analyze the impact of the \emph{shape} of $\cD$, whose choice is entirely up to the modeler, on the quality of the Greeks approximation. 

The spatial covariance structure driven by $\kappa(\cdot, \cdot)$ implies that for a given $(t,S)$,  $\widehat{\Delta}(t, S)$ is primarily determined by the training points in its vicinity. Consequently, to ensure a good \emph{average} approximation quality, it is desirable to \emph{spread} the training points, namely $\cD$ should reflect the test set $\cD'$. The respective concept of a \emph{space-filling} experimental design can be achieved in multiple ways. One obvious candidate is a gridded design, putting $\{t^i, S^i\}$ on a two-dimensional lattice. A gridded $\cD$ can however interfere with fitting of a Gaussian process model, because only a few values of distances $|x_j-x_j'|$ used within $\kappa(\bx,\bx')$ are then observed, making learning of the lengthscales more difficult. On the flip side, a gridded $\cD$ makes $\bK$ of (\ref{eq:gp_mean}) a Kronecker matrix, which can be exploited for computational speed-ups \citep{flaxman2015fast,wilson2015kernel}.

As an alternative to a training grid, one can utilize space-filling sequences, either deterministic low-discrepancy sequences, such as the (scrambled) Sobol and Halton sequences used widely in the Quasi Monte Carlo (QMC) literature \citep{Lemieux2009} or Latin Hypercube Sampling (LHS). LHS yields randomized designs that is effectively variance-reduced i.i.d.~Uniform sampling. Both approaches allow to specify a training set $\cD$ of arbitrary size. We find that the choice of \emph{how} to space-fill plays limited role in overall performance and generically employ Halton sequences in subsequent experiments. Space-filling also generalizes to higher dimensions where gridding becomes infeasible.

A related aspect concerns the impact of simulation noise on learning the Greeks. A natural question is whether it is better to train on a few highly-accurate data points, or on many low-precision inputs. This corresponds to the trade-off between design size $N = |\cD|$ and the number of MC samples $\check{N}$ in \eqref{eq:inner-mc} (see also \eqref{eq:bs-mc}). Figure~\ref{fig:paths} visualizes RIMSE of $\widehat{\Delta}$ as we vary $N,\check{N}$. We observe limited gains from increasing $\check{N}$, so the spatial effect dominates and the quality of the Delta approximation depends primarily on having a large (in terms of many different $S$-values) training set. We also note the large improvement in fit quality when the GP model switches from smoothing + interpolation to pure interpolation (the case where training inputs are exact). Indeed we see that using $N=100$ exact training points is better than training with $N=500$ inputs observed in slight noise $\check{N}=16,000$.

\begin{figure}[!ht]
\centering
\includegraphics[width=0.45\textwidth,trim=0.2in 0.2in 0in 0.2in]{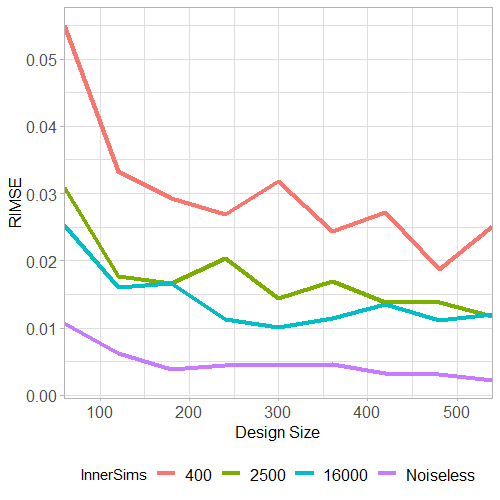}
   \includegraphics[width=0.53\textwidth,trim=0.1in 0.2in 0.1in 0.2in]{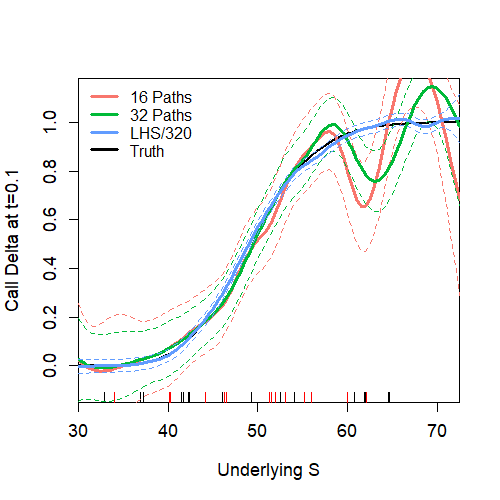}
  \caption{\emph{Left:} Impact of simulation design: root integrated mean squared error as a function of design size $N$ and number of inner MC simulations $\check{N}$.  \emph{Right:}  Comparing Delta approximation based on a space-filling design with 320 inputs to one based on 16 paths ($\Delta t =0.02$, 320 training inputs) and 32 paths (640 training inputs). All designs are for the Black--Scholes Call case study and are augmented with additional 50 virtual training points.
  \label{fig:paths}}
\end{figure}

\subsection{Quality of Delta Hedging}

Recall from Section~\ref{sec:delta-hedge} that we may decompose the total hedging loss $E_T$ into a component $E^{(d)}_T$ due to time discretization,  and a component $\hat{E}_T$ due to the Delta approximation error.  Taking the representative M52 model from Figure~\ref{fig:kernel-bs} for the Black--Scholes case study, and using $n=1,\ldots, 2500$  scenarios $(S^n_t)_{t \in [0, T]}$,  with 20 hedging periods $\Delta t= 0.02$ and $S_0 \sim {\cal N}(50,2^2)$, we find that the resulting hedging error has $\mathrm{Ave}(E_T) = 0.0163$ and $\mathrm{Var}(E_T) = 0.2980$. In comparison, hedging with the exact Black--Scholes $\Delta$ on the same set of paths we estimate $\mathbb{E}[ E^{(d)}_T ] =0.0145 $ and $\mathrm{Var}(E^{(d)}_T) = 0.2650$. Thus, in both cases hedging errors are effectively mean-zero and there is no additional bias from $\widehat{\Delta}$. Moreover,
as expected errors in $\widehat{\Delta}$ increase the variance of the hedging error; in this example they add about 3 cents of standard deviation ($\mathrm{StDev}(E^{(d)}_T) = 0.5148, \mathrm{StDev}(E_T) = 0.5459$) or about 6\% of the original. Finally, we obtain $V_E = 0.0483$ and $\mu_E = -3.1 \cdot 10^{-4}$ which is quite consistent with Corollary 1 and \eqref{eq:muE}-\eqref{eq:VE}, namely that $\mathrm{Var}(E_T) \simeq \mathrm{Var}( E^{(d)}_T ) + V_E$. {As hedging quality increases, we observe the strongest effect on the tail of $E_T$. For example, in Table~\ref{tbl:n} we report the one-sided $L_1$ hedging loss for the Call option. We observe strong improvements as training set gets larger and surrogate quality improves. On the other hand, very limited gains would be recorded if we report the $L_1$ or $L_2$ norm of $E_T$.}

\section{Path-Based Training}\label{sec:discuss}

A further motivation for the task of estimating the Greeks based on a sparse set of price data is the case where the training set $\cD$ is the \emph{history} of the contract price $Y_i = P(t_i, S_{t_i})$ along trajectories of the underlying $S_{t_0}, S_{t_1}, \ldots, S_{t_n}$. The latter is interpreted as historical observations, i.e.~a model-free paradigm where one directly uses data to learn price sensitivities. In this setting the training set $\cD$ is fixed and depends on how much data the modeler was able to collect. Clearly, a single trajectory would be insufficient for good inference; one typically would consider expired options with same strike, indexing data by time-to-maturity $\tau = T-t$ of the contract. (Under additional assumptions, one may also switch from asset price $S$ to log-moneyness $S/K$ that allows to simultaneously consider options with multiple strikes.) The resulting training sample is limited by the fact that asset time series tend to be non-stationary over long periods. This setting naturally suggests
the possibility of dynamically updating $\cD$ as more historical data is collected, see Section \ref{sec:updating} below.

Path-based training makes $\cD$ to have an irregular pattern in the $S$ dimension.
In the right panel of Figure~\ref{fig:paths} we
investigate the resulting impact on Greek approximation quality, by training our GP surrogate on  a collection of $(S_t)$-paths, sampled at some fixed time frequency $\Delta t$. The plot  shows Deltas fitted on two different datasets: one generated on a grid of $(t,S)$ values as in the previous section, and another sampled at a regular sequence of $t$'s, but along paths of $(S_t)$. In the latter case $\cD = \{ (t^j_{i}, S_{i\Delta t}^j) : t^j_i = i\Delta t \}$ for $j=1,\ldots J$ with $(S^j_\cdot)$ being $J$ i.i.d.~paths of $S$ started at pre-specified initial locations $S_0^j$.

We observe that training using paths is significantly inferior relative to training using a space-filled design. The path-based $\cD$ tends to have a lot of ``holes'' where the model is unable to accurately ``see'' the gradient. This leads to worse estimates of the GP hyperparameters $\vb$, as well as in wider credible bands.  We find that without a lot of fine-tuning (such as setting up judicious bounds on $\vb$ and carefully selecting the observation noise which must be bounded away from zero), the GP optimizer is \emph{unable} to find a reasonable fit as far as the Greeks are concerned. Instead, path-based design causes the GP surrogate to generate unstable and strongly oscillatory $\widehat{\Delta}$ and $\widehat{\Theta}$, making them practically unusable. This outcome is almost unavoidable for low $N$, but also manifests itself even with several hundred training points. Overall, we need to more than double the training set size in order to make path-based experimental design comparable to a space-filling one. Moreover, with an irregular path-based design, the GP model has a difficulty distinguishing signal from noise. Thus, increasing $\check{N}$ has only minor effect on learning Delta, instead the GP surrogate consistently overestimates the noise. This over-smoothes the data and removes most benefit of more precise inputs (higher $\check{N}$ in the experiment). 

Table  \ref{tbl:n-paths} in the Appendix contains the full summary statistics as we vary the design size. Table~\ref{tbl:n-paths} considers two different sampling frequencies in time which translate into different rectangular shapes for the training $\cD$. We observe a clear trade-off in the quality of $\widehat{\Delta}$ versus quality of $\widehat{\Theta}$: if we have more paths and lower sampling in time then the Delta estimation is better and Theta is worse. Conversely, training on fewer paths but with more frequent sampling in $t$ has adverse effect on $\widehat{\Delta}$. This pattern is intuitive for a data-driven method where quality of the approximation is explicitly linked to how much relevant information is provided in the training set. Other things being equal, we conclude that to learn Delta it is essential to have longer history rather than higher-frequency data.

\subsection{Results for the Local Volatility Model}\label{sec:lv}

To illustrate path-based training we take up the local volatility (LV) case study, where we train on an irregular grid obtained by generating 25 trajectories of $(S_t)$, saved at frequency $\Delta t= 0.04$, for a total of 250 training $(t^n,S^n_t)$ pairs. Figure \ref{fig:lv-greeks} shows the resulting Delta, Theta and Gamma approximatiors across three GP kernel families. As in the BS case study, the SE kernel has much too narrow credible bands, while the M32 kernel yields bands that are too wide. Unlike the first study, where SE-based model overcame the poor uncertainty quantification to yield the lowest RIMSE, here the SE kernel has clear trouble in providing a good fit, see~the significant error in estimating all three Greeks at both edges, especially for $S \gg 60$. This is confirmed by Table~\ref{tbl:kernel-lv} which shows that the SE kernel gives the worst fit among the three. We highlight the very high NLPD and very low coverage (i.e.~dramatic underestimation of posterior variance). The M52 and M32 kernels perform similarly for Delta, but M52 clearly outperforms both for Theta (where the credible band of the M32 model is absurdly wide) and for Gamma (where M32 is unstable, as expected). Table~\ref{tbl:n-lv} in the Appendix shows the impact of design size $N$ on the approximation quality. Overall, we thus again find Mat\'ern-5/2 to be the most appropriate kernel family.

For assessing Delta hedging, because we do not have the exact Delta instead of reporting \eqref{eq:muE}-\eqref{eq:VE} in Tables~\ref{tbl:kernel-lv}-\ref{tbl:n-lv} we report the variance of terminal hedging loss $E_T = W_T - \Phi(S_T)$. Lower $E_T$ indicates better hedging; in Table~\ref{tbl:kernel-lv} this is achieved with a M32 kernel. We note that in this case study, the approximation variance overestimates the impact on  hedging variance because there {is} a positive correlation between surrogate squared error $(\widehat{\Delta}-\Delta)^2$ (which is largest far from the strike $K$) and the specific form of $\sigma(t,S_t)$ which is also largest away from $K$. As a result, in the context of Remark~\ref{rem:varE} we obtain $\mathrm{Var}(E_T) < \mathrm{Var}( E^{(d)}_T ) + V_E$.

\begin{figure}[!ht]
\begin{tabular}{ccc}
 \includegraphics[width=0.32\textwidth]{./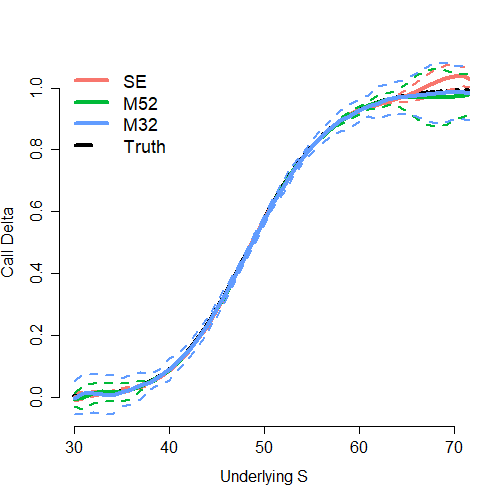} &
  \includegraphics[width=0.32\textwidth]{./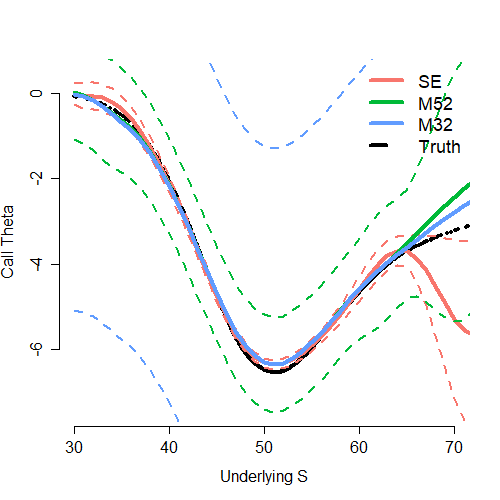} &
   \includegraphics[width=0.32\textwidth]{./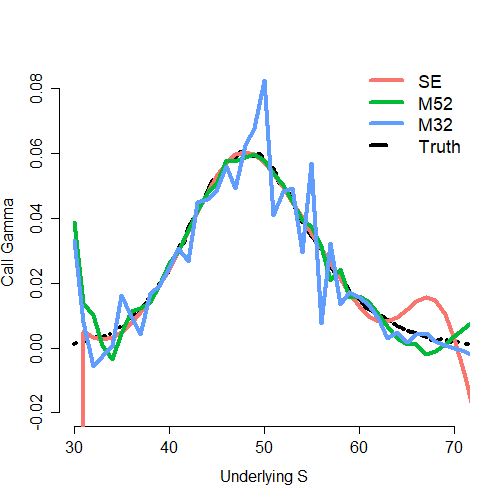} \\
   Delta & Theta & Gamma \\
    \includegraphics[width=0.32\textwidth]{./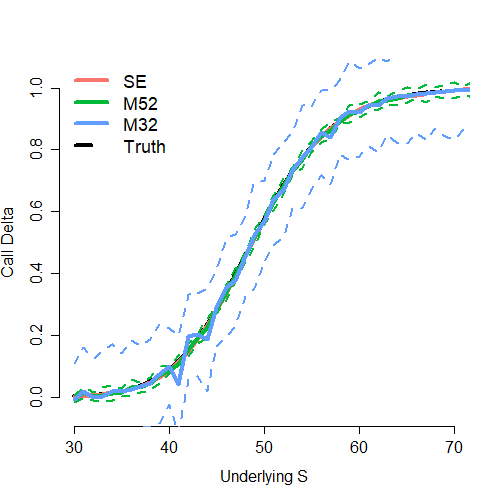} &
  \includegraphics[width=0.32\textwidth]{./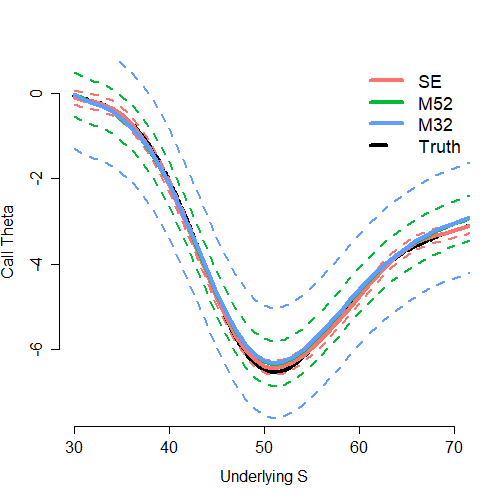} &
   \includegraphics[width=0.32\textwidth]{./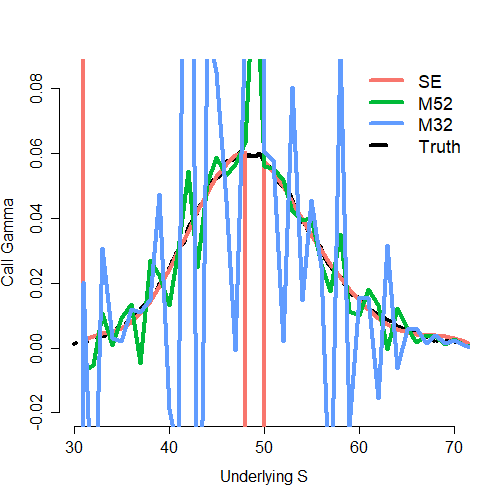} \\
   \end{tabular}
\caption{ \label{fig:lv-greeks} Fitted Greeks vs benchmark Greeks for the local volatility Call case study and three different kernel families and $t=0.08$. All models trained with 250 path-based inputs plus 50 virtual training points and include a linear trend function and constant estimated observation noise $\sigma^2_\epsilon$.}
\end{figure}

\begin{table}[ht]
	\caption{Effect of GP kernel family on learning the Delta in a local volatility model. We report 6  metrics for $\widehat{\Delta}$, as well as the RIMSE for $\Theta$ and option price $P$ (last 2 columns, cf.~\eqref{eq:rimse}). All metrics  are based on a gridded test  $\cD'$ of $31 \cdot 11 = 341$ sites, $\{ S_0: 29.4, 31.03, \ldots, 78.4\} \times \{t : 0, 0.04, \ldots, 0.36, 0.4\}$. Training set $\cD$ is of size 250 + 50 virtual points.
	\label{tbl:kernel-lv}}
	\begin{center} \vspace*{-0.2in}
				$$\begin{array}{lrrrrrrrrr}
					\hline
\text{Kernel} & \text{RIMSE} &  \text{MAD} & 95\% \text{Cvr} &  \text{Bias} & \text{NLPD} & Var(E_T) & \widehat\Theta_{Err} & \widehat{P}_{Err}  \\					\hline
\text{SE}  & 0.0400 & 0.0042 & 0.6246 & 0.00238 & 124.201 &1.316 & 1.382 & 0.080\\
\text{M52} & 0.0283 & 0.0018 & 0.9003 & 0.00250 & 79.095 & 0.915 & 0.905 & 0.046\\
\text{M32} & 0.0293 & 0.0021 & 0.9677 & 0.00288 & -2.248 & 0.718 & 0.858 & 0.038\\
\hline
				\end{array}$$\vspace*{-0.2in}
	\end{center}
\end{table}

We next use this LV case study to test further variations of the GP surrogates that are concerned with (i) role of the virtual training points; (ii) learning the observation noise; (iii) checking alternative GP regression tools. To do so, we construct several alternative GP models with results reported in Table~\ref{tbl:alt-lv}.  Our base case is a training set based on 20 paths (200 inputs), reinforced with 50 virtual points (20 deep ITM, 20 deep OTM, 10 at maturity) for a total training size of $|\cD| = 250$. The base GP uses a linear mean function $m(\bx) = \beta_0 + \beta_1 S$, a Mat\'ern-5/2 kernel, and a constant observation noise that is fitted via MLE. Henceforth, it is labeled as model M1. We then consider the following variants:

\begin{itemize}
  \item M2: same setup but with no virtual points at all (training set of size 200).
  \item M3: same setup, but only with 30 virtual points (10 deep ITM, 10 deep OTM, 10 at maturity). The alternatives M2/M3 test the impact of virtual points, namely using fewer of them relative to the base M1.

  \item M4: uses the given location-dependent observation noise $\hat{\sigma}(\bx_n)$ from the MC samples instead of a constant $\sigma_\epsilon$.

  \item M5: constant mean function $m(\bx) = \beta_0$ only.

  \item M6: pre-specified de-trending using a reference Black-Scholes model. Specifically, we de-trend by subtracting a Black-Scholes Call price based on a constant $\sigma=0.3$,
   utilizing the known maturity and spot. The GP surrogate is then fit to the ``residual''. M5/M6 illustrate the impact of the trend $m(\cdot)$ on the results. $m(\cdot)$ affects the hyperparameters of the surrogate and consequently has (an ambiguous) indirect effect on approximation quality.

  \item M7: {\tt hetGP} solver that non-parametrically learns non-constant observation noise $\sigma^2(\cdot)$ based on \cite{BGL18} and the corresponding \texttt{hetGP} package in \texttt{R}. The alternatives M4 and M7 test the role of observation noise. M4 replaces constant model-based observation noise $\sigma_\epsilon$ with a user-specified one; M7 nests M1 by using a more sophisticated GP approach.

\end{itemize}

\begin{table}[ht]
	\caption{Alternative GP surrogates for the local volatility case study. See main body for definitions of M1-M7. We report 7  metrics for $\widehat{\Delta}$, as well as the RIMSE for $\Theta$ and option price $P$. All metrics  are based on a test set  of 341 gridded sites, $\{ S_0: 29.4, 31.03, \ldots, 78.4\} \times \{t : 0, 0.04, \ldots, 0.36, 0.4\}$.
	\label{tbl:alt-lv}}
	\begin{center}\vspace*{-0.2in}
		{ $$\begin{array}{llrrrrrrrr}
					\hline
\text{Model} & \text{RIMSE} &  \text{MAD} & 95\% \text{Cvr} &  \text{Bias} & \text{NLPD} & \mu_E & V_E & \widehat\Theta_{Err} & \widehat{P}_{Err}  \\ \hline
M1 &0.0274 & 0.0024 & 0.9501 & 0.00018 & 62.26 & 0.0005 & 0.681 & 0.870 & 0.048\\
M2 &0.2465 & 0.0183 & 0.5982 & 0.08722 & 2.37 & 0.0395 & 19.578 & 3.621 & 1.611\\
M3 &0.0377 & 0.0043 & 0.9501 & -0.00257 & 60.86 & 0.0004 & 0.702 & 1.189 & 0.146\\
M4 &0.0371 & 0.0097 & 0.9677 & 0.00025 & -3.91 & 0.0004 & 1.075 & 1.327 & 0.093\\
M5 &0.0318 & 0.0038 & 0.9531 & 0.00019 & 47.56 & 0.0005 & 0.699 & 1.117 & 0.099\\
M6 &0.0067 & 0.0021 & 0.8944 & 0.00010 & -2.89 & 0.0000 & 0.029 & 1.497 & 0.026\\
M7 &0.0294 & 0.0027 & 0.9531 & 0.00014 & 52.59 & 0.0004 & 0.694 & 0.896 & 0.075\\
 \hline
\end{array}$$}
\end{center}\vspace*{-0.2in}
\end{table}

The following observations can be made regarding Table~\ref{tbl:alt-lv}. First, the addition of virtual points has a very strong positive effect. Without them (case M2), the surrogate performs very poorly. Thus, this is a ``zero-order'' feature of our approach. Moreover, the model strongly benefits from having plenty of virtual points (M3 vs M1) which are necessary to enforce the 0/1 gradient of the price surface at the edges of the domain in the asset coordinate. Second, specifying state-dependent observation noise degrades performance by introducing high-order
fluctuations into the surrogate. Similarly, a more sophisticated GP method targeting heteroskedasticity is not beneficial; there is no observed gain from adding complexity and the simpler base model wins out (M1 vs M4 or M7). Third, we observe that there are gains from having a reasonable trend function, in particular to capture the dominant trend in the asset coordinate. Such de-trending helps with spatial stationarity that GPs rely on. Thus, M5, which uses $m(x) = \beta_0$, performs worse than M1, while M6, which provides a highly accurate de-trending, helps the fit.

\subsection{Pathwise Hedging and Online Training}\label{sec:updating}

The left panel of Figure~\ref{fig:pnl-hist} illustrates using $\widehat{\Delta}$ to carry out Delta hedging along a  sample trajectory of $(S_t)$ as would be done in practice. We consider the local volatility case study;  in this scenario $S_0 = 44.70$ and $S_T = 41.66$, so the Call ends up OTM and terminal payoff and Delta are zero. We plot the benchmark $\Delta(t_k, S_{t_k})$ (red circles) and the GP-based $\widehat{\Delta}(t_k S_{t_k})$ (blue diamonds) along the 10 time-steps $t_k = k\Delta t$ with $\Delta t= 0.04$. We note that at the latter stages we have $S_t \simeq 35$ where the GP approximation is not so good (confirmed by the wide credible band of $\widehat{\Delta})$, however this has little effect on the hedging strategy since by that point Delta is almost zero anyway. On this particular path, we start with initial wealth of $W_0 = P(0,S_0) = 1.418$ and end up with the benchmark wealth of $W_T = E_T = -0.078$ (this error is driven by discrete hedging periods) and GP-based error of $\widehat{E}_T = -0.006$, i.e.~a difference of about 7 cents, in particular the GP strategy coming ahead.

\begin{figure}[ht]
  			\centering
  \includegraphics[height=2in,trim=0.1in 0.3in 0.1in 0.35in]{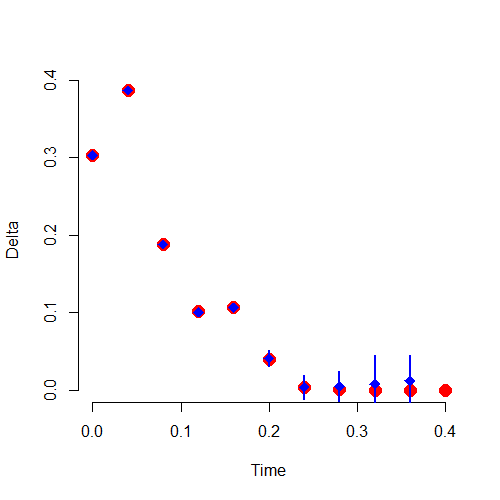}
   \includegraphics[height=2in,trim=0.1in 0.2in 0.1in 0.2in]{./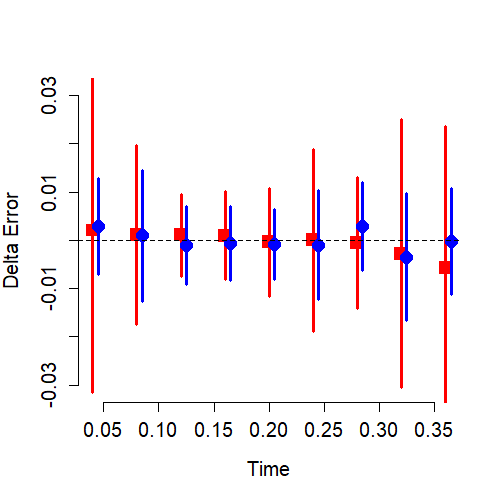}
\caption{ \label{fig:pnl-hist} 
\emph{Left:} a sample path showing Delta hedging in the local volatility model with 10 discretization periods ($\Delta t = 0.04$). Red circles indicate the benchmark Delta;  the blue vertical lines (resp.~blue diamonds) indicate the 95\% posterior bands (resp.~posterior mean) of the estimated GP Delta $\widehat{\Delta}(t_k,S_{t_k})$ based on 200+50 training inputs. \emph{Right:} Illustrating online learning of Delta along a high-frequency sampled price path. We plot the estimation error (relative to the ground truth $\Delta$) of the original $\widehat{\Delta}(t_k, S_{t_k})$ (in red) and of the recursively updated $\widehat{\Delta}^{online}(t_k,S_{t_k})$ (in blue), along with the respective 95\% credible intervals.
}
\end{figure}

\begin{remark}
  The outputted uncertainty quantification \eqref{eq:ci} for $\widehat{\Delta}$ can be used to implement a ``sticky'' hedge, where portfolio rebalancing is done only if there is a substantial trade needed, so as to save on transaction costs. Specifically, one could assume that rebalancing is carried out only when the old hedging position is outside the credible band $CI_{\alpha}$ of $\widehat{\Delta}(t_k, S_{t_k})$. In Figure~\ref{fig:pnl-hist}, this would imply no trading in the last 4 periods ($t > 0.24$), where $\widehat{\Delta}(t_k, S_{t_k}) \simeq 0$.
\end{remark}

To aid in such Delta-hedging along a path,
GP models are amenable to \emph{fast updating} in the context of augmenting with new data. Namely, the matrix form of the GP predictive equations \eqref{eq:gp_mean} can be exploited to facilitate adding new observations to improve the fit.  At the initial stage, the GP surrogate is trained on $N$ historical stock paths. Then one wishes to Delta hedge ``in real-time'' along a new $(S_t)$-trajectory. To do so, we sequentially collect $(k\Delta t, S_{k\Delta t}, P_{k\Delta t})$ values at regular intervals and then simultaneously estimate the ``in-sample'' $\widehat{\Delta}(k\Delta t, S_{k\Delta t})$ in order to find the new amount of shares to hedge with. In other words, at each hedging time instance we augment our training with the just-observed data and immediately estimate the Delta at the latest $(t,S_t)$ values. Such dynamic hedging mimics the online calibration that practitioners often carry out and amounts to recursively updating the original GP surrogate.

Adding a new training point $(\bx_{n+1},y_{n+1})$ to an existing GP model corresponds to augmenting the kernel matrix $\mathbf{K}$ with an extra row/column and analogously augmenting the other terms in the GP predictive equations. This can be done very efficiently through the so-called \emph{rank-1 update} if the GP hyperparameters are kept fixed, and requires just ${\cal O}(N^2)$ effort compared to ${\cal O}(N^3)$ effort to invert the full covariance matrix $\mathbf{K}$ in \eqref{eq:gp_mean}.

The right panel of Figure \ref{fig:pnl-hist} illustrates dynamic hedging through the above GP surrogate updating. We start with 20 historical paths sampled at $\Delta t = 0.04$ and then sequentially augment with high-frequency real-time trajectory sampled at $\Delta t = 0.004$ (reflecting the idea that the trader is now closely monitoring the option compared to originally downloading a fixed dataset). In Figure~\ref{fig:pnl-hist} we compare the initial $\widehat{\Delta}$ based on the 20 original paths versus the ``online'' $\widehat{\Delta}$, demonstrating  how the quality of the fit improves thanks to data fusion. Online learning of the Delta makes the estimation errors smaller (closer to zero in the figure) and furthermore narrows the posterior credible bands, hence doubly improving model fit: lower bias and higher credibility.

\begin{remark} One can of course proceed by brute force by simply re-estimating the entire GP surrogate as more data becomes available. That will likely give a slightly better fit. In comparison, online updating is more elegant conceptually and moreover is lightning fast since we do not need to keep re-running the MLE optimizer for the hyperparameters $\vb$.
\end{remark}

\subsection{Extending to Real-life Options Data}

Our method is directly applicable to dealing with observed option data since it requires no calibration beyond fitting the GP surrogate and is predicated on training using option prices, a quantity that is readily available in real life. To do so, one would switch to time-to-maturity $\tau$ parametrization, using historical data about options that already expired to generate a training set in the $(\tau,S)$ coordinates. Nevertheless, multiple challenges must be addressed before operationalizing this idea.

First, one must decide what does an ``option price'' mean, distinguishing between quotes, executed transactions and the issue of associated non-synchronous time stamps (e.g.~a market close price might not actually be a price that is directly relevant at any given fixed time of day). Moreover, quoted prices have bid/ask spreads which could be viewed as upper/lower bounds for $P(t,S)$. A related issue is the traded volume/open interest which could be interpreted as a proxy for quote quality.

There are several ways to match these features with the GPR setting:
\begin{itemize}
  \item Take $\sigma(\bx)$ to be proportional to the bid/ask spread (probabilistically ensuring that the fitted $\widehat{P}$ is within the spread)

  \item Take $\sigma(\bx)$ to be a function of Traded Volume/Order Imbalance to ensure that price of more liquid options are given more weight;

  \item Modify the Gaussian likelihood in \eqref{eq:noise} to account for the bid/ask spread. For example the \texttt{GPML Matlab} package implements a Beta likelihood that is appropriate for ``range regression''.

  \item Use a cut-off criterion to separate liquid contracts (where prices have to be matched either exactly or within bid/ask) and illiquid ones, where observations are treated only as ``vague'' suggestions.

 \end{itemize}
We remark that taking non-constant $\sigma(\bx)$ is statistically equivalent to a weighted least-squares criterion, i.e.~penalizing fitting errors more (resp.~less) when $\sigma(\bx)$ is small (resp.~large).

Second, one would have to contend with the irregular time series of financial data, with gaps due to weekends, holidays, missing data, etc. Of note, GPR is perfectly suited for that purpose since it does not assume or require any specific shape of the training set. At the same time, as demonstrated above in the context of irregular grid in the $S$-coordinate, irregular shapes can materially worsen the quality of the GP surrogate and its Greek estimators.

Finally, the described procedure so far assumed that time-to-maturity $\tau$ and asset price $S$ are sufficient statistics for determining the option price $P$.
For historical data, such as SPX options, we do observe strong time dependence that can be termed ``VIX effects'': for essentially same $(\tau, S)$ pairs the historical prices will be quite different (i.e.~different implied vol) on different days, indicating the presence of a further latent factor. As a first step, one would need to include calendar time as another covariate, working with the triple $(t,S,T)$ as postulated in a local volatility model. Another way to handle temporal non-stationarity would be to use a weighted regression, putting more weight on more recent data and discounting old data, which might minimize model mis-specification. A more complex extension would be to directly input VIX or other (stochastic volatility) factors when fitting the surrogate.

\section{Conclusion and Open Problems}

To conclude, we presented a framework of constructing GP surrogates for the purpose of learning option price sensitivities. Our method is completely statistical and fully generic, requiring simply a training set of (noisy) option prices. The GP surrogate is able to simultaneously provide estimates of Delta and Theta, along with their rigorously defined posterior uncertainty. Our case studies suggest that it is important to pick an appropriate kernel family, with the Mat\'ern-5/2 striking the best compromise across the numerous performance metrics we considered. {A GP M52 approximation offers a twice-differentiable surrogate for the option price that is smooth enough for Greek computation and flexible enough to capture the price surface.} Our analysis further highlights the importance of boundary conditions (specifically the gains provided by including virtual training points) and careful noise modeling (in particular letting the algorithm estimate observation variance). Another striking feature we observed is the significant impact of training set shape on quality of the Greeks approximation, including the benefit of space-filling.

An open problem is how to handle the
several well-known no-arbitrage constraints for the option price and its sensitivities. For example, a Call price must be convex monotone increasing in $S$ ($\Delta \ge 0,\Gamma \ge 0$), with slope less than unity ($\Delta \le 1$). It is also monotone decreasing in $t$,  $\widehat{\Theta} \le 0$.  To incorporate such features into a GP surrogate, one may consider monotonic GPs (see e.g.~\cite{riihimaki2010gaussian}) who make use of virtual GP-gradient observations, or finite-dimensional shape-constrained GPs \citep{DixonCrepey21}. Extending our \texttt{R} implementation to cover these is left for future research.
Another related work on incorporating gradient observations into a GP model is by \cite{chen2013enhancing}.

{A different comparator to the GP methodology are neural networks. In this framework, one runs a neural network (NN) regression to build a surrogate for the option price and then applies auto-differentiation to get the Greeks, see e.g.~\cite[Ch 2]{ChataignerThesis}. The latter step is available as a native function call for any NN architecture (i.e.~no analytic derivations necessary) in modern machine learning suites such as \texttt{TensorFlow}. Based on our preliminary experiments, NN-based Greeks tend to be unstable for small training sets ($N \ll 500$) as considered here, but perform very well for $N \ge 1000$. Full investigation of NN Greek approximators and respective uncertainty quantification for Delta hedging is left to future research.}

\appendix

\section{Proofs}\label{app:approximated_delta}

\begin{proof}[Proof of Proposition \ref{lemma:delta}]
Under the physical measure, we are assuming
$$dS_t = \mu S_t dt + \sigma(t,S_t)S_t dB_t,$$
where $B$ is a Brownian motion. We denote the price of a vanilla derivative with maturity $T$ by $P(t,S)$.
The Delta hedging strategy perfectly replicates the derivative and can be described as
\begin{align}\label{eq:replication}
dP(t,S_t) = \Delta(t,S_t)dS_t + r(P(t,S_t) - \Delta(t,S_t)S_t)dt.
\end{align}

Let us consider an approximated Delta $\widehat{\Delta}$. The hedging error in continuous time follows the dynamics
\begin{align*}
dE_t = dP(t,S_t) - \widehat{\Delta}(t,S_t)dS_t - r(P(t,S_t) - \widehat{\Delta}(t,S_t)S_t)dt,
\end{align*}
with $E(0)=0$. {By the Delta-hedging replication Equation (\ref{eq:replication})}, we find
\begin{align*}
&dE_t =dP(t,S_t) - \widehat{\Delta}(t,S_t)dS_t - r(P(t,S_t) - \widehat{\Delta}(t,S_t)S_t)dt\\
&= \Delta(t,S_t)dS_t + r(P(t,S_t) - \Delta(t,S_t)S_t)dt \\
&- \widehat{\Delta}(t,S_t)dS_t - r(P(t,S_t) -  \widehat{\Delta}(t,S_t)S_t)dt\\
&= (\Delta(t,S_t) - \widehat{\Delta}(t,S_t)) (dS_t - rS_t dt)\\
&= (\Delta(t,S_t) - \widehat{\Delta}(t,S_t)) dX_t,
\end{align*}
where
$$dX_t = dS_t - rS_t dt = (\mu - r) S_t dt + \sigma(t,S_t)S_t dB_t.$$
Then
\begin{align*}
E_T = \int_0^T (\Delta(t,S_t) -\widehat{\Delta}(t,S_t)) dX_t.
\end{align*}
Under discrete-time delta hedging, we have
$$\widehat{\Delta}(t,S) = \sum_{k=0}^{K-1} \widehat{\Delta}(t_k,S_{t_k}) 1_{[t_k,t_{k+1})}(t)$$
and we find
\begin{align*}
E_T = \sum_{k=0}^{K-1} \int_{t_k}^{t_{k+1}} (\Delta(t,S_t) -\widehat{\Delta}(t_k,S_{t_k})) dX_t.
\end{align*}

Adding and subtracting $\Delta(t_k,S_{t_k})$ yields the result.
\end{proof}

\begin{proof}[Proof of Corollary \ref{cor:moments}]
The result follows from conditioning on $S_{t_k}$ and using the first two moments of $X$.
\end{proof}

\section{Additional Tables}

\begin{table}[ht]
	\caption{Effect of training set size on estimated Delta in a Black--Scholes model with learning based on $S$-paths. We report 8  metrics for $\widehat{\Delta}$, as well as the RIMSE for $\Theta$ and option price $P$ (last 2 columns, cf.~\eqref{eq:rimse}). All metrics  are based on a gridded test set of 1600 sites, $\{ S_0: 30, 30.5, \ldots, 69.5\} \times \{t : -0.01, 0.01, \ldots, 0.37\}$. GP model with Mat\'ern-5/2 kernel, linear trend function and constant estimated $\sigma_\epsilon$; all designs augmented with 50 additional virtual training points.
	\label{tbl:n-paths}}
	\begin{center}\vspace*{-0.2in}
		$$\begin{array}{lrrrrrrrrrr}
					\hline \hline
					N & \text{RIMSE} &  \text{MAD} & 95\% \text{Cvr} &  \text{Bias} & \text{NLPD} & {Var(E_T)} & \mu_E & V_E & \widehat\Theta_{Err} & \widehat{P}_{Err}  \\ \hline
\multicolumn{11}{c}{ \text{Paths with } \Delta t=0.04 \text{:  $N/10$ training paths}} \\ \hline
100 & 0.0663 & 0.0239 & 0.9762 & 0.0115 & -4.28 & 0.827 & 0.0168 & 0.604 & 2.958 & 0.433\\
150 & 0.0680 & 0.0347 & 0.9356 & 0.0138 & -4.35 & 0.744 &0.0104 & 0.542 & 2.228 & 0.387\\
200 & 0.0626 & 0.0293 & 0.9331 & 0.0107 & -4.50 & 0.653 & 0.0055 & 0.461 & 2.384 & 0.353\\
250 & 0.0602 & 0.0308 & 0.9281 & 0.0117 & -4.55 & 0.616 & 0.0055 & 0.415 & 2.148 & 0.327\\
300 & 0.0588 & 0.0284 & 0.9331 & 0.0100 & -4.60 & 0.592 & 0.0034 & 0.394 & 2.142 & 0.309\\
350 & 0.0554 & 0.0243 & 0.9381 & 0.0116 & -4.72 & 0.499 & -0.0011 & 0.292 & 2.010 & 0.286\\
400 & 0.0518 & 0.0213 & 0.9381 & 0.0108 & -4.87 & 0.460 & -0.0007 & 0.244 & 1.851 & 0.239\\
450 & 0.0501 & 0.0245 & 0.9525 & 0.0134 & -4.93 & 0.464 & 0.0009 & 0.246 & 1.993 & 0.261\\
\hline
\multicolumn{11}{c}{\text{Paths with } \Delta t=0.02 \text{: $N/20$ training paths}} \\ \hline
100 &0.1348 & 0.0376 & 0.9419 & 0.0062 & -3.14 &  0.629 &  0.0096 & 0.353 & 1.314 & 0.491\\
150 &0.0786 & 0.0410 & 0.9525 & 0.0083 & -4.05 &  0.507 & 0.0103 & 0.242 & 1.370 & 0.339\\
200 &0.0857 & 0.0379 & 0.9219 & 0.0058 & -3.95 &  0.529  & 0.0127 & 0.264 & 1.551 & 0.333\\
250 &0.1141 & 0.0388 & 0.8619 & 0.0050 & -3.38 &  0.693 & 0.0145 & 0.419 & 1.191 & 0.389\\
300 &0.0856 & 0.0287 & 0.8750 & 0.0056 & -4.09 &  0.519  & 0.0109 & 0.264 & 1.083 & 0.334\\
350 &0.0571 & 0.0314 & 0.9419 & 0.0058 & -4.74 &  0.449 & 0.0069 & 0.205 & 1.223 & 0.235\\
400 &0.0586 & 0.0311 & 0.9312 & 0.0118 & -4.69 &  0.442 & 0.0069 & 0.193 & 1.213 & 0.271\\
450 &0.0569 & 0.0302 & 0.9194 & 0.0120 & -4.73 &  0.428 & 0.0061 & 0.182 & 1.225 & 0.267\\ \hline
\end{array}$$
\end{center}\vspace*{-0.2in}
\end{table}

\begin{table}[htb]
	\caption{Effect of training set size $N=|\cD|$ on learning the Delta in the local volatility case study. We report 6  metrics for $\widehat{\Delta}$, as well as the RIMSE for $\Theta$ and option price $P$ (last 2 columns, cf.~\eqref{eq:rimse}). All metrics  are based on a gridded test set $\cD'$
of $31 \cdot 11 = 341$ sites, $\{ S_0: 29.4, 31.03, \ldots, 78.4\} \times \{t : 0, 0.04, \ldots, 0.36, 0.4\}$. The GP model uses  Mat\'ern-5/2 kernel, a linear trend function and estimated constant $\sigma^2_\epsilon$. 
	\label{tbl:n-lv}}
	\begin{center}\vspace*{-0.2in}
		$$\begin{array}{lrrrrrrrrr}
					\hline
					N & \text{RIMSE} &  \text{MAD} & 95\% \text{Cvr} &  \text{Bias} & \text{NLPD} & Var(E_T) & \widehat\Theta_{Err} & \widehat{P}_{Err}  \\ \hline
80 & 0.0375 & 0.0085 & 0.9589 & 0.00203 & 0.414 & 1.258  & 1.104 & 0.099\\
120 &0.0304 & 0.0052 & 0.9619 & 0.00200 & 9.752 & 1.063 & 1.043 & 0.088\\
160 &0.0290 & 0.0060 & 0.9560 & 0.00201 & 44.797 & 1.068 & 0.989 & 0.087\\
200 &0.0279 & 0.0028 & 0.9501 & 0.00213 & 62.243 & 1.024 & 0.909 & 0.082\\
240 &0.0282 & 0.0019 & 0.9032 & 0.00247 & 76.653 & 0.916 & 0.914 & 0.047\\
280 &0.0282 & 0.0018 & 0.8886 & 0.00249 & 100.673 & 0.918 &  0.895 & 0.047\\
320 &0.0283 & 0.0018 & 0.8651 & 0.00245 & 116.404 & 0.904 &  0.884 & 0.047\\
360 &0.0283 & 0.0017 & 0.8534 & 0.00264 & 113.659 & 0.959 &  0.898 & 0.044\\
400 &0.0282 & 0.0012 & 0.8182 & 0.00264 & 152.908 & 0.967 &  0.895 & 0.045\\ \hline
\end{array}$$
\end{center}
\end{table}

\bibliography{references}
\bibliographystyle{chicago}

\clearpage
\includepdf[pages=-]{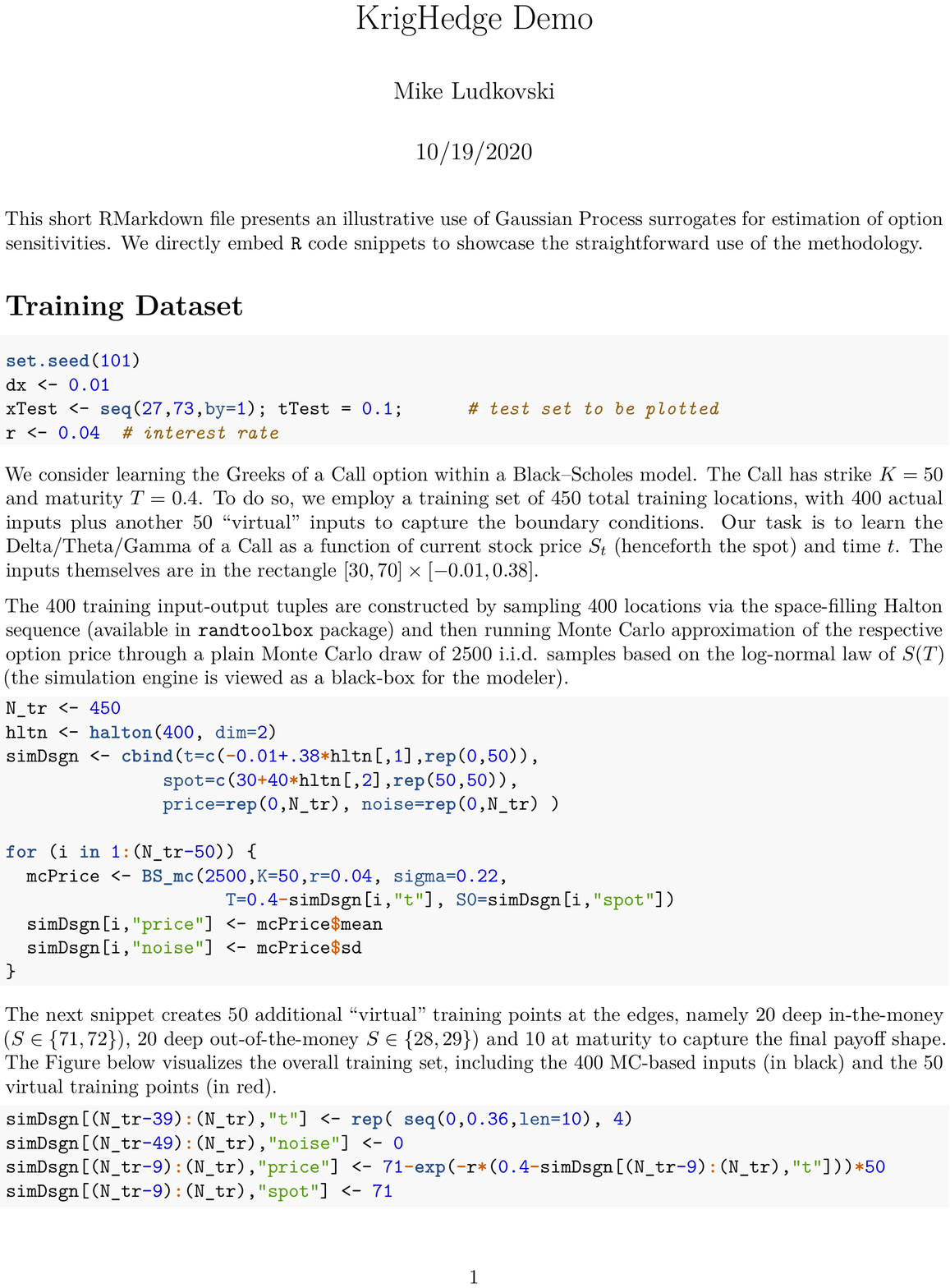}

\end{document}